\def\eqref#1{equation~\ref{#1}}
\def\1{\bm{1}}
\def\rvv{{\mathbf{v}}}
\def\rvx{{\mathbf{x}}}
\def\rvy{{\mathbf{y}}}
\def\rvz{{\mathbf{z}}}
\def\mA{{\bm{A}}}
\def\mB{{\bm{B}}}
\def\mC{{\bm{C}}}
\def\mF{{\bm{F}}}
\def\mH{{\bm{H}}}
\def\mI{{\bm{I}}}
\def\mJ{{\bm{J}}}
\def\mP{{\bm{P}}}
\def\mU{{\bm{U}}}
\def\mV{{\bm{V}}}
\def\mW{{\bm{W}}}
\def\mSigma{{\bm{\Sigma}}}
\DeclareMathAlphabet{\mathsfit}{\encodingdefault}{\sfdefault}{m}{sl}
\SetMathAlphabet{\mathsfit}{bold}{\encodingdefault}{\sfdefault}{bx}{n}
\def\gD{{\mathcal{D}}}
\def\gN{{\mathcal{N}}}
\title{GSURE-Based Diffusion Model \\ Training with Corrupted Data}
\author{\name Bahjat Kawar\thanks{Equal contribution.} \email bahjat.kawar@cs.technion.ac.il \\
        \addr Department of Computer Science
      \AND
      \name Noam Elata$^*$ \email noamelata@campus.technion.ac.il \\
      \addr Department of Electrical and Computer Engineering
      \AND
      \name Tomer Michaeli \email tomer.m@ee.technion.ac.il \\
      \addr Department of Electrical and Computer Engineering
      \AND
      \name Michael Elad \email elad@cs.technion.ac.il \\ 
      \addr Department of Computer Science \\ \\
      \addr Technion - Israel Institute of Technology, Haifa, Israel}
\theoremstyle{plain}
\theoremstyle{definition}
\theoremstyle{remark}
\newcommand{\rveps}[0]{\boldsymbol{\epsilon}}
\newcommand{\xbar}[0]{\bar{\mathbf{x}}}
\newcommand{\ybar}[0]{\bar{\mathbf{y}}}
\newcommand{\zbar}[0]{\bar{\mathbf{z}}}
\newcommand{\bbR}[0]{\mathbb{R}}
\newcommand{\bbE}[0]{\mathbb{E}}
\begin{document}

\maketitle

\begin{abstract}
Diffusion models have demonstrated impressive results in both data generation and downstream tasks such as inverse problems, text-based editing, classification, and more.
However, training such models usually requires large amounts of clean signals which are often difficult or impossible to obtain.
In this work, we propose a novel training technique for generative diffusion models based only on corrupted data.
We introduce a loss function based on the Generalized Stein's Unbiased Risk Estimator (GSURE), and prove that under some conditions, it is equivalent to the training objective used in fully supervised diffusion models.
We demonstrate our technique on face images as well as Magnetic Resonance Imaging (MRI), where the use of undersampled data significantly alleviates data collection costs.
Our approach achieves generative performance comparable to its fully supervised counterpart without training on any clean signals.
In addition, we deploy the resulting diffusion model in various downstream tasks beyond the degradation present in the training set, showcasing promising results\footnote{Our code is available at \url{https://github.com/bahjat-kawar/gsure-diffusion}.}. 
\end{abstract}

\section{Introduction}
Denoising diffusion probabilistic models (DDPMs)~\citep{sohl2015deep, ho2020denoising, song2019generative}, or diffusion models for short, are a family of generative models that has recently risen to prominence.
They have achieved state-of-the-art performance in image generation~\citep{song2020score, vahdat2021score, dhariwal2021diffusion, latent_diffusion, kim2022refining}, as well as impressive generative modeling capabilities in other modalities~\citep{ho2022imagen, singer2023makeavideo, kong2021diffwave, popov2021grad, gong2022diffuseq, li2022diffusionlm, tevet2022human}, including protein structures~\citep{watson2022broadly, qiao2022dynamic, schneuing2022structure, yim2023se} and medical data~\citep{song2023solving, chung2022score, jalal2021robust, xie2022measurement, chung2023solving, li2023descod, adib2023synthetic}.
The prowess and flexibility of DDPMs have enabled their profound impact on downstream applications~\citep{kawar2022denoising, theis2022lossy, blau2022threat, pinaya2022fast, wyatt2022anoddpm, kawar2023imagic, zimmermann2021score}.

Training a diffusion model to learn an unknown data distribution is a complex task.
It usually requires training parameter-heavy neural networks on large amounts of pristine data.
For instance, diffusion models' success in image generation was in part enabled by large curated datasets, containing millions or even billions of images~\citep{imagenet, schuhmann2022laion}.
However, such large-scale datasets of pristine samples may often be expensive, difficult, or even impossible to obtain, especially in the medical domain~\citep{Mullainathan2022}. Interest in the degraded data setting has risen in recent years~\cite{xiangddm, aali2023solving, daras2024ambient}, yet existing solutions address only specific cases.
In this work, we present GSURE-Diffusion, a method for training generative diffusion models based on data corrupted by linear degradations and Gaussian noise.
This setting can make data collection for deep learning significantly faster and less expensive.

\begin{figure}
    \centering
    \includegraphics[width=\textwidth]{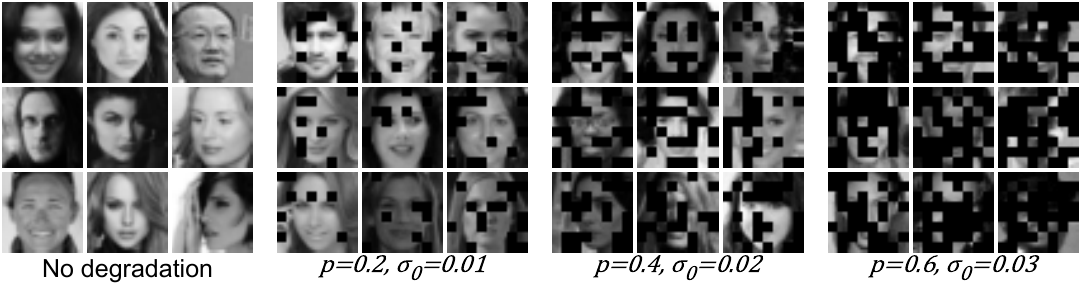}
    \caption{Training sets samples of the different degradation settings in CelebA~\citep{liu2015celeba} experiments.}
    \label{fig:celeba_training}
\end{figure}

GSURE-Diffusion operates on a datasets of noisy linear measurements of signals, and assumes the signal acquisition process is randomized within a fixed general structure, which is the case in many real-world applications.
In this setting, we present a novel loss function to learn the underlying data distribution.
First, we use the Singular Value Decomposition (SVD) of the degradation operators to decouple the measurement equation, following DDRM~\citep{kawar2022denoising}. This transformation simplifies the degradation into a masking operation.
Then, we add synthetic noise to the SVD-transformed measurements, likening them to the noisy samples used in the DDPM~\citep{ho2020denoising} framework.
Finally, we use the ensemble version of the Generalized Stein's Unbiased Risk Estimator (GSURE)~\citep{aggarwal2022ensure, eldar2008generalized} to learn to denoise samples without access to ground-truth clean signals. Simply put, GSURE's mathematical formulation allows us to makes use of the undamaged data within the corrupted image. 
Our proposed GSURE-based loss function is general to all randomized linear measurement settings, and we prove its equivalence to the fully supervised denoising diffusion loss under some conditions.

To empirically evaluate our technique, we apply it on a downsized grayscale version of CelebA~\citep{liu2015celeba}, a dataset of celebrity face images.
We train a GSURE-Diffusion model on noisy images with randomly missing patches, and compare its generative output with an oracle model that trained on the full clean images.
We observe that GSURE-Diffusion results in comparable generative performance to the oracle, despite having trained solely on corrupted data.

Furthermore, we use GSURE-Diffusion for Magnetic Resonance Imaging (MRI), a ubiquitous medical imaging modality providing vital diagnostic information.
Generative models for MRI are usually trained on datasets of fully sampled MR images, which can be expensive to obtain.
In contrast, we train a generative model on noisy undersampled data obtained from accelerated MRI scans.
By utilizing GSURE-Diffusion, we train a model with performance comparable to an oracle version while significantly reducing the time and resources required to collect the training dataset.
Then, we showcase the flexibility of the generative diffusion model we obtain. We use this model for accelerated MRI reconstruction, extending its applicability to acceleration factors beyond those encountered during training. Moreover, this model may serve as a foundational framework for addressing various tasks. We demonstrate its capabilities in MRI reconstruction for different subsampling strategies and uncertainty quantification, and compare its results with an oracle version.

In conclusion, we present GSURE-Diffusion, a novel method for training generative diffusion models based on corrupted measurements of the underlying data.
We demonstrate the capabilities of the method through several experiments, and show its applicability to real-world problems.
We hope that GSURE-Diffusion will facilitate future work on generative modelling for challenging settings, generalizing for more complex scenarios and various modalities.


\section{Background}
\subsection{Denoising Diffusion Probabilistic Models}
Denoising Diffusion Probabilistic Models (DDPMs)~\citep{ho2020denoising} are a family of generative models that learn a distribution $p_\theta(\rvx)$, approximating a data distribution $q(\rvx)$ from a dataset $\gD$ of samples.
DDPMs follow a Markov chain structure ${\rvx_T \rightarrow \rvx_{T-1} \rightarrow \dots \rightarrow \rvx_1 \rightarrow \rvx_0}$ that reverses a forward noising process from $\rvx_0$ to $\rvx_T$.
In the forward process, $\rvx_0$ is set to be $\rvx$, and
the intermediate variables $\rvx_t$ are defined by $q^{(t)}(\rvx_t | \rvx_{t-1})$, usually chosen to be a simple Gaussian $\gN\left(\sqrt{1 - \beta_t} \rvx_{t-1}, \beta_t \mI\right)$.
This leads to a useful property, ${q^*(\rvx_t | \rvx_0) = \gN\left(\sqrt{\bar{\alpha}_t} \rvx_0, \left(1 - \bar{\alpha}_t\right) \mI \right)}$ with $\bar{\alpha}_t = \prod_{s=1}^{t} \left(1 - \beta_t\right)$, which facilitates model training.
In the reverse and more challenging process,
the learned distribution $p_\theta^{(t)}(\rvx_{t-1} | \rvx_t)$ is also modeled as Gaussian, with a learned mean dependent on a neural network $f_\theta^{(t)}(\rvx_t)$ and a fixed~\citep{ho2020denoising} or learned~\citep{nichol2021improved} covariance.

The diffusion model $f_\theta^{(t)}(\rvx_t)$ is trained to optimize an evidence lower bound (ELBO) on the log-likelihood objective~\citep{sohl2015deep}.
The ELBO can be simplified into the following denoising objective:
\begin{equation}
\label{eq:diffusion-loss}
    \sum_{t=1}^{T} \gamma_t \bbE 
    \left[ \left\lVert  f_\theta^{(t)}(\rvx_t) - \rvx_0 \right\rVert_2^2 \right],
\end{equation}
where the $\gamma_t$ assign weights to different $t$ (different noise levels), and the expectation is over some ${\rvx_t \sim q(\rvx_t | \rvx_0)}$, ${\rvx_0 \sim q(\rvx)}$. Please refer to~\citep{ho2020denoising, song2020denoising} for derivations.
After training, diffusion models synthesize data by starting with a sample ${\rvx_T \sim \gN(0, \mI)}$, following the learned distributions $p_\theta^{(t)}$ along the Markov chain, sampling from each, and outputting $\rvx_0$ as the final sample.
Diffusion models have had incredible success in image generation~\citep{song2020score, vahdat2021score, dhariwal2021diffusion, latent_diffusion, kim2022refining}, as well as generation of other modalities~\citep{ho2022imagen, singer2023makeavideo, kong2021diffwave, popov2021grad, gong2022diffuseq, li2022diffusionlm, tevet2022human}.
They have also been deployed in a myriad of related tasks~\citep{kawar2022denoising, theis2022lossy, blau2022threat, pinaya2022fast, wyatt2022anoddpm, kawar2023imagic, zimmermann2021score}.
In this work, we revisit DDPMs and seek a way to train them using only corrupted data.

\begin{figure}
    \centering
    \includegraphics[width=\textwidth]{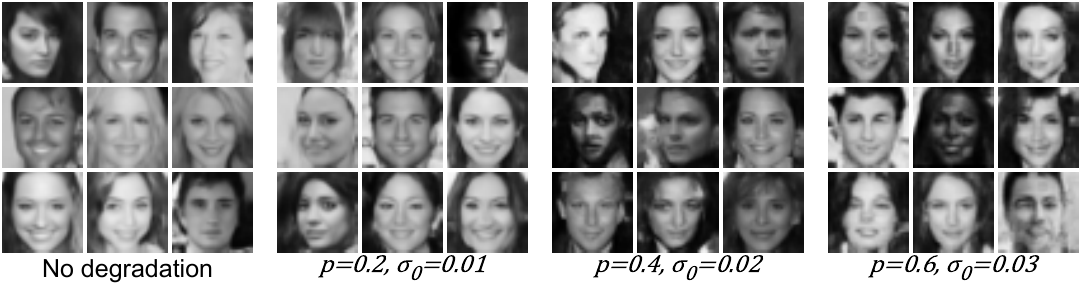}
    \caption{Generated samples (with $50$ DDIM~\citep{song2020denoising} steps) from models trained on different degradation settings in CelebA~\citep{liu2015celeba} experiments.}
    \label{fig:celeba_gen}
\end{figure}

\subsection{Generalized Stein's Unbiased Risk Estimator (GSURE)}
Given noisy measurements $\rvy = \rvx + \rvz$ (where $\rvx, \rvy, \rvz \in \bbR^{n}$) with noise $\rvz \sim \gN(0, \sigma^2 \mI)$, and a function $f(\rvy)$ aiming to estimate $\rvx$ from $\rvy$,
Stein's unbiased risk estimator (SURE)~\citep{stein1981estimation} is an unbiased estimator for the mean squared error (MSE) of $f(\rvy)$, formulated as
\begin{equation}
\label{eq:sure}
    \bbE\left[ \left\lVert f(\rvy) - \rvx \right\rVert_2^2 \right] = 
    \bbE\left[ \left\lVert f(\rvy) - \rvy \right\rVert_2^2 \right]
    + 2 \sigma^2 \bbE\left[ \nabla_{\rvy} \cdot f(\rvy) \right]
    - n \sigma^2.
\end{equation}
Crucially, SURE provides the ability to estimate the MSE of a denoiser $f(\rvy)$ without access to clean signals $\rvx$.
As a result, many researchers have used SURE for unsupervised learning of denoisers~\citep{zhang1998adaptive, blu2007sure, metzler2018unsupervised, soltanayev2018training, nguyen2020hyperspectral, jo2021rethinking}, even extending to training diffusion models based on (fully sampled) noisy data~\citep{xiangddm, aali2023solving}.

In the context of inverse problems, SURE has been generalized for corrupted measurements beyond additive white Gaussian noise~\citep{eldar2008generalized}.
The Generalized SURE (GSURE) considers $\rvy = \mH \rvx + \rvz$ (where $\rvx \in \bbR^{n}$, $\mH \in \bbR^{m \times n}$, $\rvy, \rvz \in \bbR^{m}$, and $\rvz \sim \gN(0, \mC)$), and a function $f(\rvy)$ estimating $\rvx$.
In this case, GSURE provides an unbiased estimate for the projected MSE:
\begin{equation}
\label{eq:gsure}
    \bbE\left[ \left\lVert \mP \left(f(\rvy) - \rvx \right) \right\rVert_2^2 \right] =
    \bbE\left[ \left\lVert \mP \left(f(\rvy) - \rvx_\mathrm{ML} \right) \right\rVert_2^2 \right]
    + 2 \bbE\left[ \nabla_{\mH^\top \mC^{-1} \rvy} \cdot \mP f(\rvy) \right]
    + c,
\end{equation}
where $\mH^\dagger$ is the Moore-Penrose pseudo-inverse of $\mH$, $\mP = \mH^\dagger \mH$ is a projection matrix onto the range-space of $\mH$, $\rvx_\mathrm{ML} = \left(\mH^\top \mC^{-1} \mH\right)^\dagger \mH^\top \mC^{-1} \rvy$, and $c$ is a constant that does not depend on $f(\rvy)$.
Several works have utilized GSURE for solving inverse problems by training only on corrupted measurements~\citep{metzler2018unsupervised, zhussip2019training, liu2020rare, abu2022image}.
However, when $\mH$ causes significant information loss, the projected MSE stops being a good proxy for the full MSE.
Ensemble SURE (ENSURE)~\citep{aggarwal2022ensure} learns from a dataset of measurements, each corrupted by a different operator $\mH$.
Therefore, the expectation over the projected MSE is taken over $\mH$ as well as the data and noise. This constitutes a more accurate proxy for the full MSE without relying on clean signals.
In this work, we extend the ENSURE framework for training a diffusion model using corrupted data.

\section{GSURE-Diffusion: Mathematical Formulation}
\subsection{Problem Formulation}
\label{sec:problem-formulation}
We are interested in training a generative diffusion model that can sample from an unknown data distribution $q(\rvx)$. However, we only have access to a dataset $\gD$ of corrupted measurements
\begin{equation}
\label{eq:inverse-problem}
    \rvy = \mH \rvx + \rvz,
\end{equation}
where $\rvy \in \bbR^m$, $\rvx \in \bbR^n$, $\mH \in \bbR^{m \times n}$, and $\rvz \sim \gN(0, \sigma_0^2 \mI)$ is additive white Gaussian noise (AWGN).\footnote{Our method can also handle anisotropic uncorrelated noise. We only consider AWGN to simplify notations.}
\autoref{eq:inverse-problem} refers to a single instance of an ideal image and its corresponding measurement, and more generally, different measurements $\rvy$ in the dataset may relate to different signals $\rvx$, different degradation procedures $\mH$, and different noise realizations $\rvz$. We assume $\rvx$, $\rvz$, and $\mH$ are randomly and independently sampled from their respective distributions.

In order to decouple the mathematical relationship between the observed measurements and the underlying data, we follow~\citep{kawar2022denoising} and utilize the singular value decomposition (SVD) of $\mH$,
\begin{equation}
\label{eq:svd}
    \mH = \mU \mSigma \mV^\top,
\end{equation}
where $\mU \in \bbR^{m \times m}$ and $\mV \in \bbR^{n \times n}$ are orthogonal matrices, and $\mSigma \in \bbR^{m \times n}$ is a rectangular diagonal matrix containing the singular values of $\mH$.
We define $\xbar = \mV^\top \rvx$, $\ybar = \mSigma^\dagger \mU^\top \rvy$, and $\zbar = \mSigma^\dagger \mU^\top \rvz$. Using these definitions and the SVD, \autoref{eq:inverse-problem} becomes
\begin{equation}
\label{eq:transformed-problem}
    \ybar = \mP \xbar + \zbar,
\end{equation}
where $\mP = \mSigma^\dagger \mSigma$ is a diagonal subsampling matrix with zeroes and ones, and $\zbar \sim \gN(0, \sigma_0^2 \mSigma^\dagger \mSigma^{\dagger\top})$ constitutes anisotropic uncorrelated Gaussian noise.

We make the following assumptions on the training dataset $\gD$:
(i) The sampling matrices $\mH$ and noise levels $\sigma_0$ are known;
(ii) All matrices $\mH$ share the same right-singular vectors $\mV^\top$; and 
(iii) The different $\mH$ across the dataset jointly cover the signal space $\bbR^n$, \textit{i.e.}, $\bbE[\mP]$ is a positive definite matrix.\footnote{Under these notations, $\bbE[\mP]$ is measured for a fixed $\mV^\top$, and the values in $\mSigma$ are not necessarily ordered.}
These assumptions are satisfied in many real-world applications such as Magnetic Resonance Imaging (MRI) --
measurements are acquired in a similar fashion for all data points (upholding (ii)), and the subsampling pattern can be randomly chosen for each point (upholding (iii)). $\mH$ and $\sigma_0$ are derived from the physics of the signal acquisition procedure, thus upholding (i).

Under the transformed measurement equation presented in \autoref{eq:transformed-problem}, we aim to train a generative model for $\xbar$, which can easily translate to $\rvx$ through the orthogonal transformation $\rvx = \mV \xbar$.

\subsection{GSURE-Based Denoising Diffusion Loss Function}
\label{sec:gsure-loss}
In order to train a diffusion model for $\xbar$, we aim to obtain noisy training samples $\xbar_t$ that satisfy the marginal distribution $q^*(\xbar_t | \xbar) = \gN\left(\sqrt{\bar{\alpha}_t} \xbar, \left(1 - \bar{\alpha}_t\right) \mI \right)$, as in traditional diffusion models.
However, we only have access to corrupted measurements $\ybar$ as in \autoref{eq:transformed-problem}.
For a given $t$, we perturb these measurements with additional noise according to
\begin{equation}
\label{eq:noise-add}
    \xbar_t = \sqrt{\bar{\alpha}_t} \ybar + \left(\left(1 - \bar{\alpha}_t\right) \mI - \bar{\alpha}_t \sigma_0^2 \mSigma^\dagger \mSigma^{\dagger\top}\right)^{\frac{1}{2}} \rveps_t,
\end{equation}
where $\rveps_t \sim \gN\left(0, \mI\right)$ is independently sampled.
Intuitively, $\sqrt{\bar{\alpha}_t} \ybar$ includes noise with a diagonal covariance $\bar{\alpha}_t \sigma_0^2 \mSigma^\dagger \mSigma^{\dagger\top}$.
We increase the noise level in each entry by an appropriate amount to reach a variance of $1 - \bar{\alpha}_t$ in all entries. Increasing the noise to the desired level is possible as long as $\left(\left(1 - \bar{\alpha}_t\right) \mI - \bar{\alpha}_t \sigma_0^2 \mSigma^\dagger \mSigma^{\dagger\top}\right)$ is a positive-semi-definite (PSD) matrix. Because $\bar{\alpha}_t$ is monotonically decreasing w.r.t $t$, by setting the beginning of the noise schedule such that $\left(\left(1 - \bar{\alpha}_t\right) \mI - \bar{\alpha}_t \sigma_0^2 \mSigma^\dagger \mSigma^{\dagger\top}\right)$ is PSD, we ensure that this covariance matrix is PSD for all timesteps.
This way, we obtain samples $\xbar_t$ suitable for training a diffusion model, as they follow the marginal distribution
\begin{equation}
\label{eq:xbart-marginal}
    q(\xbar_t | \xbar, \mP) = \gN\left(\sqrt{\bar{\alpha}_t} \mP \xbar, \left(1 - \bar{\alpha}_t\right) \mI \right).
\end{equation}
This resembles the ideal distribution of training samples $q^*(\xbar_t | \xbar)$, differing only in the mean value for entries dropped by $\mP$.
In the following, we derive a loss function that uses $\xbar_t$ satisfying \autoref{eq:xbart-marginal},
and utilizes an expectation over $\xbar$, $\zbar$, and $\mP$ yielding samples $\xbar_t \sim \bbE_{\mP}[q(\xbar_t | \xbar, \mP)]$.
This results in an estimate for denoising ideal samples from $q^*(\xbar_t | \xbar)$.
The estimate assumes that the model has the ability to generalize for samples from $q^*(\xbar_t | \xbar) = q(\xbar_t | \xbar, \mP= \mI)$, despite having trained only on signals with an undersampled $\mP$. We validate our model's ability to denoise samples from $q^*(\xbar_t | \xbar)$ in \autoref{sec:generalization}.

Ideally, we would like to train a diffusion model $f_\theta^{(t)}(\xbar_t)$ using the traditional denoising diffusion loss function in \autoref{eq:diffusion-loss}.
However, as we only have access to undersampled measurements, we consider a weighted expected projected MSE objective (similar to ENSURE~\citep{aggarwal2022ensure}):
\begin{equation}
\label{eq:projected-diffusion-loss}
    \sum_{t=1}^{T} \gamma_t \bbE \left[ \left\lVert \mW \mP \left(  f_\theta^{(t)}(\xbar_t) - \xbar \right) \right\rVert_2^2 \right],
\end{equation}
where the expectation is taken over ${\xbar_t \sim q(\xbar_t | \xbar, \mP)}$, $\xbar \sim q(\xbar)$, $\mP$ is independently sampled and $\mW = \bbE[\mP]^{-\frac{1}{2}} \succ 0$ (positive definite). The weight matrix $\mathbf{W}$ is placed in \autoref{eq:projected-diffusion-loss} for balancing the effect of the projections $\mathbf{P}$, in the case where elements of $\mathbf{P}$ do not have equal probability. In practice, this expectation is realized through $\ybar$ sampled from the dataset $\gD$ and \autoref{eq:noise-add}.
\begin{restatable}{proposition}{ensemble}
\label{thm:ensemble}
For $\rvx \sim q(\rvx)$, $\xbar = \mV^\top \rvx$, $\xbar_t$ sampled from \autoref{eq:xbart-marginal}, and the diagonal weight matrix $\mW = \bbE[\mP]^{-\frac{1}{2}} \succ 0$ (positive definite), if  $\mP$ and $\left(f_\theta^{(t)}(\xbar_t) - \xbar\right)$ are statistically independent, then \autoref{eq:projected-diffusion-loss} equals \autoref{eq:diffusion-loss}. 
\end{restatable}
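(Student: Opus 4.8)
The plan is to prove the identity termwise in $t$, since both sides are weighted sums over $t$ with the same weights $\gamma_t$. Fixing $t$, it suffices to show $\bbE\left[\|\mW\mP(f_\theta^{(t)}(\xbar_t) - \xbar)\|_2^2\right] = \bbE\left[\|f_\theta^{(t)}(\xbar_t) - \xbar\|_2^2\right]$, where, using $\rvx=\mV\xbar$ with $\mV$ orthogonal so that norms are preserved, the right-hand side is exactly the denoising diffusion loss of \autoref{eq:diffusion-loss} expressed in the $\xbar$-coordinates. Writing $\ve := f_\theta^{(t)}(\xbar_t) - \xbar$ for the error vector, the goal becomes $\bbE[\ve^\top \mP^\top\mW^\top\mW\mP\,\ve] = \bbE[\ve^\top\ve]$.

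First I would simplify the quadratic form using the structure of the matrices. Both $\mP$ and $\mW$ are diagonal, hence symmetric and mutually commuting; moreover $\mP$ is a $0/1$ subsampling matrix, so $\mP^2=\mP$. Consequently $\mP^\top\mW^\top\mW\mP = \mW^2\mP$, and the left-hand side collapses to the weighted coordinate sum $\bbE[\ve^\top\mW^2\mP\,\ve]=\sum_{i=1}^n (\mW^2)_{ii}\,\bbE[P_{ii}\,e_i^2]$, where $P_{ii}\in\{0,1\}$ are the diagonal entries of $\mP$ and $e_i$ the components of $\ve$. Because the weighting matrix $\mW^2\mP$ is diagonal, only the diagonal of $\ve\ve^\top$ enters, so I never need the full cross-covariance of $\ve$, only its coordinatewise second moments.

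Next I would invoke the independence hypothesis. Since $\mP$ (equivalently the vector of its diagonal entries) is statistically independent of $\ve$, each $P_{ii}$ is independent of $e_i^2$, giving the factorization $\bbE[P_{ii}\,e_i^2]=\bbE[P_{ii}]\,\bbE[e_i^2]$. Substituting $\mW^2 = \bbE[\mP]^{-1}$, a diagonal positive-definite matrix with $(\mW^2)_{ii}=1/\bbE[P_{ii}]$ and each $\bbE[P_{ii}]>0$ by assumption (iii), the factor $(\mW^2)_{ii}\,\bbE[P_{ii}]$ equals $1$ for every $i$. Hence the sum reduces to $\sum_i \bbE[e_i^2]=\bbE[\ve^\top\ve]$, which is the desired right-hand side; summing over $t$ with weights $\gamma_t$ completes the argument.

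I expect the only genuinely delicate point to be the use of the independence assumption: the clean cancellation hinges on decoupling $P_{ii}$ from $e_i^2$ coordinatewise, and this is precisely what the stated hypothesis that $\mP$ and $(f_\theta^{(t)}(\xbar_t)-\xbar)$ are statistically independent buys us. Everything else, namely the idempotency of $\mP$, the commutation of the diagonal matrices, and the role of $\mW=\bbE[\mP]^{-1/2}$ as the exact reweighting that undoes the nonuniform sampling probabilities, is routine linear algebra once the problem is reduced to a coordinate sum. I would also note in passing that positive-definiteness of $\bbE[\mP]$ (assumption (iii)) is what makes $\mW$ well defined, so the reweighting is legitimate.
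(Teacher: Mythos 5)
Your algebraic core is correct, and it is essentially the paper's own argument written coordinatewise rather than in trace form: the paper passes through $\mathrm{Trace}$ of the error outer product, uses cyclic invariance, commutation of diagonal matrices, idempotency $\mP^2 = \mP$, then the independence hypothesis to pull $\bbE[\mP]$ out, and finally $\mW^2 \bbE[\mP] = \mI$ — which is precisely your computation $\sum_i (\mW^2)_{ii}\,\bbE[P_{ii}]\,\bbE[e_i^2] = \sum_i \bbE[e_i^2]$. So the reduction of \autoref{eq:projected-diffusion-loss} to $\bbE\left[\lVert f_\theta^{(t)}(\xbar_t) - \xbar \rVert_2^2\right]$ is sound and matches the paper.

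The gap is in the final identification. After the cancellation, the quantity you obtain is $\bbE\left[\lVert f_\theta^{(t)}(\xbar_t) - \xbar \rVert_2^2\right]$ where $\xbar_t$ is still drawn from \autoref{eq:xbart-marginal}, i.e.\ $\gN\left(\sqrt{\bar{\alpha}_t}\,\mP\xbar,\ (1-\bar{\alpha}_t)\mI\right)$ with $\mP$ random, whereas \autoref{eq:diffusion-loss} takes its expectation over ideal noisy samples from $q^*(\xbar_t \mid \xbar) = \gN\left(\sqrt{\bar{\alpha}_t}\,\xbar,\ (1-\bar{\alpha}_t)\mI\right)$. The orthogonality of $\mV$ only converts norms and the clean target between the $\rvx$- and $\xbar$-coordinates; it does nothing to repair the fact that the network's input was synthesized from the masked signal $\mP\xbar$ rather than from $\xbar$. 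So your claim that the right-hand side is ``exactly'' the denoising diffusion loss is not justified: the two losses are expectations of the same integrand under different input distributions, and they are not equal as functionals of $\theta$ by algebra alone. The paper's proof explicitly flags this residual discrepancy and closes it with an additional assumption — that the network can infer $\mP$ from $\xbar_t$ and tailor its output to each $\mP$, including $\mP = \mI$ — concluding that the two objectives share the same minimizer rather than being pointwise identical. To complete your argument you would need to state and invoke this (or an equivalent) generalization assumption, as the paper does and as it empirically assesses in \autoref{sec:generalization}.
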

We place the proof in \autoref{sec:proofs}.
The proof relies on the aforementioned assumptions made in ENSURE~\citep{aggarwal2022ensure}. 
We assess the validity of this assumption for our trained networks in \autoref{sec:independence}, and find that the assumption approximately holds for half the range of timesteps used in the denoiser. Nevertheless, even though the assumption is violated for some timesteps, the algorithm's performance remains similar to the oracle model, as shown in \autoref{fig:mri_denoising}. 
The expected projected MSE term in \autoref{eq:projected-diffusion-loss} measures the squared error of the denoiser $f_\theta^{(t)}(\xbar_t)$ only in entries kept by $\mP$.
This fact makes the loss easier to measure, as we do not have access to the entries dropped by $\mP$.
However, we still cannot accurately measure this loss, because we lack access to noiseless signals $\mP \xbar$.
To mitigate this, we utilize GSURE to estimate \autoref{eq:projected-diffusion-loss} using only $\xbar_t$ with the loss
\begin{equation}
\label{eq:gsure-diff-loss}
    \sum_{t=1}^{T} \gamma_t
    \bbE \left[ \left\lVert \mW \mP \left(  f_\theta^{(t)}(\xbar_t) - \frac{1}{\sqrt{\bar{\alpha_t}}} \xbar_t \right) \right\rVert_2^2
    + 2 \lambda_t \left( \nabla_{\xbar_t} \cdot \mP \mW^2 f_\theta^{(t)}(\xbar_t) \right)
    + c \right],
\end{equation}
where $c$ is a constant that does not depend on $\theta$, and $\lambda_t$ is a scalar hyperparameter.
The expectation is over the same random variables from \autoref{eq:projected-diffusion-loss}.
\begin{restatable}{proposition}{unbiased}
\label{thm:unbiased}
For $\rvx \sim q(\rvx)$, $\xbar = \mV^\top \rvx$, $\xbar_t$ sampled from \autoref{eq:xbart-marginal}, and $\lambda_t = 1 - \bar{\alpha}_t$, it holds that \autoref{eq:gsure-diff-loss} equals \autoref{eq:projected-diffusion-loss}.
\end{restatable}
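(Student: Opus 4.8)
The plan is to prove the identity one timestep at a time. Both \autoref{eq:gsure-diff-loss} and \autoref{eq:projected-diffusion-loss} are sums over $t$ with the same weights $\gamma_t$, so it suffices to fix $t$ and show that the two bracketed expectations agree; I would further condition on a fixed pair $(\xbar,\mP)$ and reintroduce the outer expectation over $\xbar\sim q(\xbar)$ and $\mP$ only at the very end, since all the manipulations are linear in that outer expectation. The single structural fact that makes this tractable is the design of \autoref{eq:noise-add}: the extra noise is calibrated precisely so that, conditionally, $\xbar_t \mid \xbar,\mP \sim \gN\!\left(\sqrt{\bar{\alpha}_t}\,\mP\xbar,\ (1-\bar{\alpha}_t)\mI\right)$ has \emph{isotropic} covariance. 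This is what lets me invoke Stein's identity (equivalently, the generalized estimator of \autoref{eq:gsure}) directly, rather than a covariance-weighted variant.

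Concretely, I would instantiate the GSURE formula of \autoref{eq:gsure} with the effective forward operator $\mH \mapsto \sqrt{\bar{\alpha}_t}\,\mP$, clean signal $\xbar$, observation $\xbar_t$, and noise covariance $\mC \mapsto (1-\bar{\alpha}_t)\mI$. Two quantities then fall out by direct computation: the GSURE projection $\mH^\dagger\mH$ equals $\mP$ (since $\mP$ is a diagonal idempotent), and the maximum-likelihood point becomes $\xbar_{\mathrm{ML}} = \tfrac{1}{\sqrt{\bar{\alpha}_t}}\mP\xbar_t$, so that $\mP\!\left(f_\theta^{(t)}(\xbar_t)-\xbar_{\mathrm{ML}}\right) = \mP\!\left(f_\theta^{(t)}(\xbar_t)-\tfrac{1}{\sqrt{\bar{\alpha}_t}}\xbar_t\right)$. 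This identifies the data-fidelity term of \autoref{eq:gsure-diff-loss} with the projected data term of GSURE. The remaining task is the weighted extension: since $\mW$ and $\mP$ are both diagonal they commute and satisfy $\mP^2=\mP$, so expanding $\left\lVert \mW\mP\!\left(f_\theta^{(t)}(\xbar_t)-\xbar\right)\right\rVert_2^2$ by inserting $\pm\tfrac{1}{\sqrt{\bar{\alpha}_t}}\xbar_t$ inside the norm splits it into the data term, a cross term linear in the noise $\vu := \sqrt{1-\bar{\alpha}_t}\,\rveps_t$, and a pure-noise term whose expectation is a $\theta$-independent multiple of $\Tr(\mW^2\mP)$.

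The heart of the argument is the cross term $\bbE\!\left[f_\theta^{(t)}(\xbar_t)^\top \mW^2\mP\,\vu\right]$. Because $\xbar_t$ depends on $\vu$ only additively for fixed $(\xbar,\mP)$, Stein's identity applied coordinatewise converts $\bbE[f_{\theta,i}\,\vu_i]$ into $(1-\bar{\alpha}_t)$ times $\bbE[\partial_{\xbar_{t,i}} f_{\theta,i}]$; summing against the diagonal $\mW^2\mP$ produces exactly the divergence $\bbE\!\left[\nabla_{\xbar_t}\cdot \mP\mW^2 f_\theta^{(t)}(\xbar_t)\right]$ of \autoref{eq:gsure-diff-loss}. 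The scalar in front is obtained by composing this Stein factor with the Jacobian of the change of variables $\xbar_t \mapsto \mH^\top\mC^{-1}\xbar_t = \tfrac{\sqrt{\bar{\alpha}_t}}{1-\bar{\alpha}_t}\mP\xbar_t$; matching the resulting coefficient against the $2\lambda_t$ written in \autoref{eq:gsure-diff-loss} is the step that forces $\lambda_t$ to the value assumed in the proposition. Finally I would collect the pure-noise term and the $\xbar_{\mathrm{ML}}$-dependent pieces into the $\theta$-independent constant $c$; here it is pleasant to note that taking the expectation over $\mP$ and using $\mW^2=\bbE[\mP]^{-1}$ gives $\bbE[\Tr(\mW^2\mP)]=\Tr(\mI)=n$, so $c$ collapses to a clean deterministic constant. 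Restoring the sum over $t$ and the outer expectation over $\xbar$ and $\mP$ then yields the claimed equality.

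I expect the main obstacle to be the bookkeeping in this cross-term step: tracking through Stein's identity how differentiation with respect to $\xbar_t$ relates to differentiation with respect to the maximum-likelihood observation, while carrying the two diagonal matrices $\mW^2$ and $\mP$ correctly so that precisely $\nabla_{\xbar_t}\cdot \mP\mW^2 f_\theta^{(t)}$ emerges (and not some reweighted or rescaled divergence) with the exact coefficient that calibrates $\lambda_t$. A secondary technical point is the regularity required for Stein's identity to hold, namely weak differentiability of $f_\theta^{(t)}$ together with mild integrability of $f_\theta^{(t)}$ and its divergence against the Gaussian density; I would record these as standing assumptions, exactly as in the classical derivations of \autoref{eq:sure} and \autoref{eq:gsure}.
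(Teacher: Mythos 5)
Your overall route is the same as the paper's: you instantiate a weighted form of \autoref{eq:gsure} with $\mH = \sqrt{\bar{\alpha}_t}\,\mP$, $\mC = (1-\bar{\alpha}_t)\mI$ and observation $\xbar_t$, compute $\mH^\dagger \mH = \mP$ and $\xbar_{\mathrm{ML}} = \tfrac{1}{\sqrt{\bar{\alpha}_t}}\mP\xbar_t$ (both correct, and identical to the paper's steps), handle $\mW$ through diagonality and $\mP^2 = \mP$, and sweep the remaining terms into the constant $c$ (your observation that $\bbE[\Tr(\mW^2\mP)] = n$ is also correct). Deriving the weighted identity from scratch by coordinatewise Stein, rather than citing the weighted GSURE of ENSURE as the paper does, is a legitimate and more self-contained variant.

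The genuine gap is exactly the step you defer as the ``main obstacle'': the scalar in front of the divergence. Carry out your own decomposition: conditionally on $(\xbar,\mP)$ write $\xbar_t = \sqrt{\bar{\alpha}_t}\mP\xbar + \vu$ with $\vu \sim \gN(0,(1-\bar{\alpha}_t)\mI)$; then $\mW\mP\bigl(f_\theta^{(t)}(\xbar_t)-\xbar\bigr) = \mW\mP\bigl(f_\theta^{(t)}(\xbar_t)-\tfrac{1}{\sqrt{\bar{\alpha}_t}}\xbar_t\bigr) + \tfrac{1}{\sqrt{\bar{\alpha}_t}}\mW\mP\vu$, since $\mP^2=\mP$ annihilates the $(\mI-\mP)\xbar$ contribution. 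The cross term therefore carries an explicit prefactor $\tfrac{2}{\sqrt{\bar{\alpha}_t}}$ that your outline loses track of: combining it with Stein's identity $\bbE\bigl[f_\theta^{(t)}(\xbar_t)^\top\mW^2\mP\vu\bigr] = (1-\bar{\alpha}_t)\,\bbE\bigl[\nabla_{\xbar_t}\cdot\mP\mW^2 f_\theta^{(t)}(\xbar_t)\bigr]$, the divergence enters with coefficient $2(1-\bar{\alpha}_t)/\sqrt{\bar{\alpha}_t}$, i.e.\ your derivation forces $\lambda_t = (1-\bar{\alpha}_t)/\sqrt{\bar{\alpha}_t}$, not the value $1-\bar{\alpha}_t$ stated in the proposition. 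Your proposed repair --- ``composing this Stein factor with the Jacobian of the change of variables $\xbar_t \mapsto \mH^\top\mC^{-1}\xbar_t$'' --- is a double count: differentiating with respect to $\xbar_t$ via Stein already produces the final scalar, while the change of variables to the sufficient statistic is an \emph{alternative} derivation of that same scalar, not an extra factor; and indeed neither composition, $(1-\bar{\alpha}_t)\cdot\tfrac{\sqrt{\bar{\alpha}_t}}{1-\bar{\alpha}_t} = \sqrt{\bar{\alpha}_t}$ nor $(1-\bar{\alpha}_t)\cdot\tfrac{1-\bar{\alpha}_t}{\sqrt{\bar{\alpha}_t}}$, equals $1-\bar{\alpha}_t$. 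For comparison, the paper lands on $\lambda_t = 1-\bar{\alpha}_t$ because its substitution writes the divergence variable of \autoref{eq:weighted-gsure} as $\mP^\top\bigl((1-\bar{\alpha}_t)\mI\bigr)^{-1}\xbar_t$, i.e.\ \emph{without} the $\sqrt{\bar{\alpha}_t}$ coming from $\mH^\top = \sqrt{\bar{\alpha}_t}\mP^\top$, after which its change of variables yields exactly $1-\bar{\alpha}_t$. Your route, executed faithfully, disagrees with that by a factor of $1/\sqrt{\bar{\alpha}_t}$; until you either locate an error in the bookkeeping above or adopt and justify the paper's substitution, the proposal does not establish the equality with the stated $\lambda_t$.
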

\newcommand*{\propositionCrefname}{Proposition}
We place the proof in the \autoref{sec:proofs}.
\autoref{thm:ensemble} and \autoref{thm:unbiased} present a principled method to train a denoising diffusion model based only on corrupted data $\ybar$.
By minimizing the loss function in \autoref{eq:gsure-diff-loss}, we obtain a trained diffusion model that can be utilized in the same fashion as a fully supervised one, for both generation and downstream applications.

When our proposed training scheme is applied in practice, we apply the following modifications for practical considerations: First, the expectation term in \autoref{eq:gsure-diff-loss} is replaced by an average over training batches. Second, $\frac{1}{\sqrt{\bar{\alpha_t}}} \xbar_t$ is replaced by $\ybar$ to alleviate the high variance of the loss function at little to no cost in terms of bias. Lastly, the divergence term is calculated using the Hutchinson’s trace estimator, which is an unbiased Monte Carlo estimator~\citep{ramani2008monte, hutchinson1989stochastic}. While this estimator introduces some noise into our equation, we find that the estimation error is negligible when using a numerically stable derivative calculation. We expand upon these pragmatic implementation details in \autoref{sec:pragmatic}.

\begin{table}[b]
  \caption{FID~\citep{fid} results for diffusion models trained on increasing levels of degradation for $32 \times 32$-pixel CelebA~\citep{liu2015celeba} images, with different DDIM~\citep{song2020denoising} steps at generation time. Models were trained with (top) or without (bottom) GSURE-Diffusion, on degraded data.}
  \label{tab:celeba-fid}
  \begin{center}
  \begin{tabular}{l l c c c c}
    \textbf{Training Scheme} &
    \textbf{Data Degradation} &
    $\mathbf{10}$ \textbf{Steps} &
    $\mathbf{20}$ \textbf{Steps} &
    $\mathbf{50}$ \textbf{Steps} &
    $\mathbf{100}$ \textbf{Steps} \\
    \hline
    \multirow{4}{*}{Regular}
    & No degradation (oracle) & $21.99$ & $13.09$ & $08.15$ & $06.84$ \\
    & $p=0.2$, $\sigma_0=0.01$ & $86.25$ & $108.56$ & $119.30$ & $123.08$ \\
    & $p=0.4$, $\sigma_0=0.02$ & $216.74$ & $229.28$ & $236.61$ & $240.03$ \\
    & $p=0.6$, $\sigma_0=0.03$ & $273.62$ & $279.17$ & $280.89$ & $281.60$ \\
    \hline
    \multirow{3}{*}{GSURE-Diffusion}
    & $p=0.2$, $\sigma_0=0.01$ & $18.77$ & $12.25$ & $08.84$ & $08.82$ \\
    & $p=0.4$, $\sigma_0=0.02$ & $19.26$ & $14.98$ & $14.03$ & $15.14$ \\
    & $p=0.6$, $\sigma_0=0.03$ & $34.51$ & $27.74$ & $26.42$ & $28.31$ \\
  \end{tabular}
  \end{center}
\end{table}


\section{Experiments}

In the following, we demonstrate the capabilities of our method for training a diffusion model using corrupted data. 
To obtain corrupted data, we simulate several corruptions on datasets containing clean images.
Then, we train a diffusion model based only on the corrupted data, and compare its results against an \textit{oracle} model (with identical training hyperparameters) which is trained on the pristine data with the traditional diffusion loss function from \autoref{eq:diffusion-loss}.

\subsection{Human Face Images}
\label{sec:celeba-exp}
To empirically evaluate GSURE-Diffusion, we apply it on $32\times32$-pixel grayscale face images from CelebA~\citep{liu2015celeba}. 
We simulate a corrupted measurement process by splitting the images into $4\times4$-pixel non-overlapping patches, and randomly erasing each patch with probability $p$. We also perturb the data with AWGN with standard deviation $\sigma_0$.
This degradation matches our assumptions in \autoref{sec:problem-formulation} (see \autoref{sec:data}), and we provide samples of it in \autoref{fig:celeba_training}.
We adapt the U-Net~\citep{unet} architecture from DDPM~\citep{ho2020denoising} to match the image dimensions,
and train diffusion models for increasing levels of degradation on the CelebA training set. Training hyperparameters and more details are provided in \autoref{sec:impl}.

After training, we generate images from the models using the deterministic DDIM~\citep{song2020denoising} sampling schedule. 
We measure the generative performance using the Fr\'{e}chet Inception Distance (FID)~\citep{fid} between $10000$ generated images and the CelebA validation set.
As can be seen in \autoref{tab:celeba-fid} and \autoref{fig:celeba_gen}, our GSURE-Diffusion models achieve generative performance comparable to the oracle model, despite having trained only on corrupted data.
As expected, the performance deteriorates when more severe degradations are applied to the training data.
We note that in the few timesteps regime, GSURE-Diffusion slightly outperforms the oracle model, possibly due to the regularization effect of training on corrupted measurements.
However, once we increase the number of timesteps used for generation,
the oracle model benefits from the more fine-grained process, whereas
GSURE-Diffusion struggles to do so.


\begin{figure}[t]
    \centering
    \includegraphics[width=0.92\textwidth]{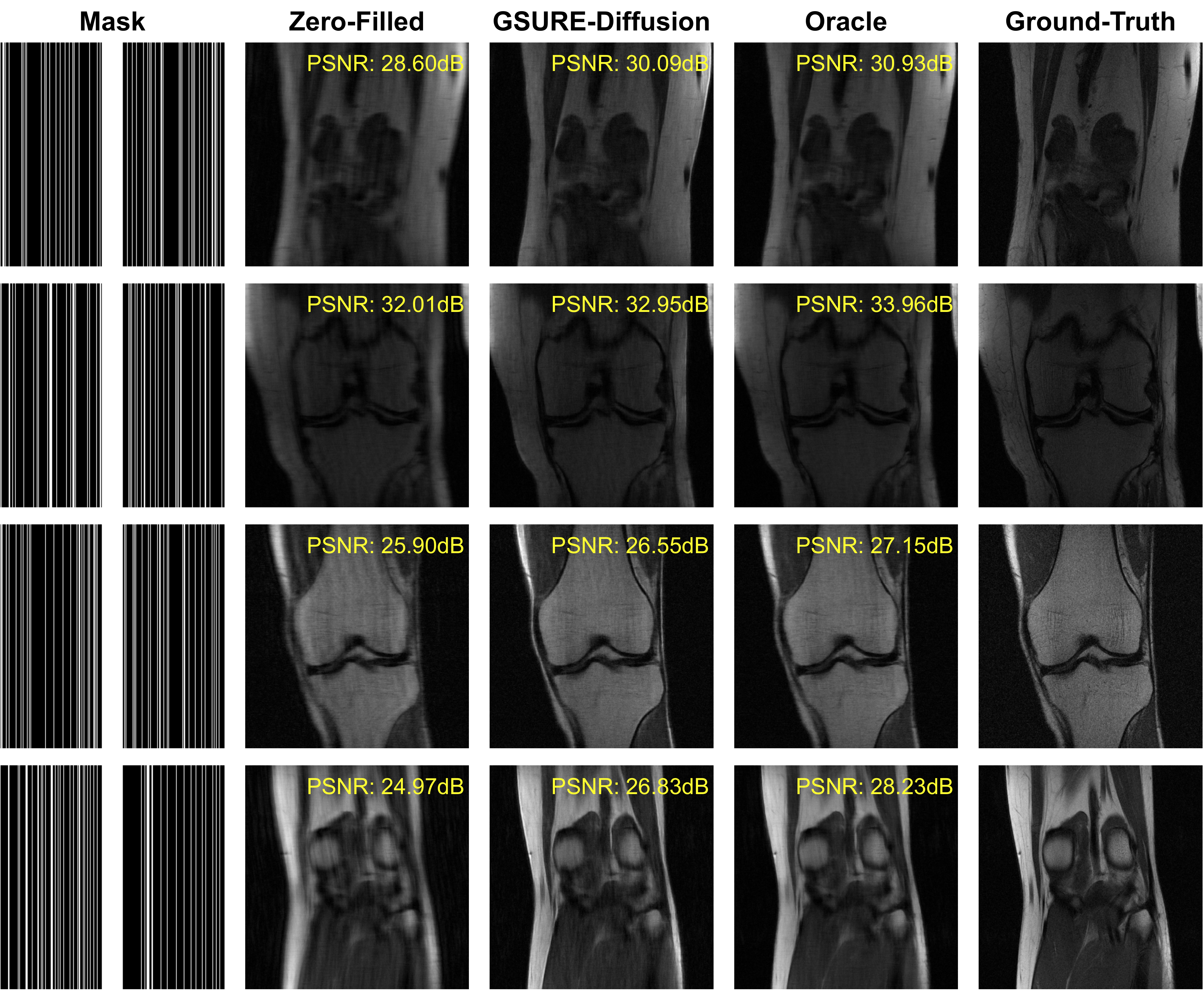}
    \caption{Accelerated MRI reconstruction results for $R=4$ and $\sigma_0=0.01$.}
    \label{fig:mri_recon_4}
\end{figure}

\subsection{Magnetic Resonance Imaging (MRI)}
\begin{figure}[t]
\label{fig:MRI-denoising}
    \centering
    \includegraphics[width=0.99\textwidth]{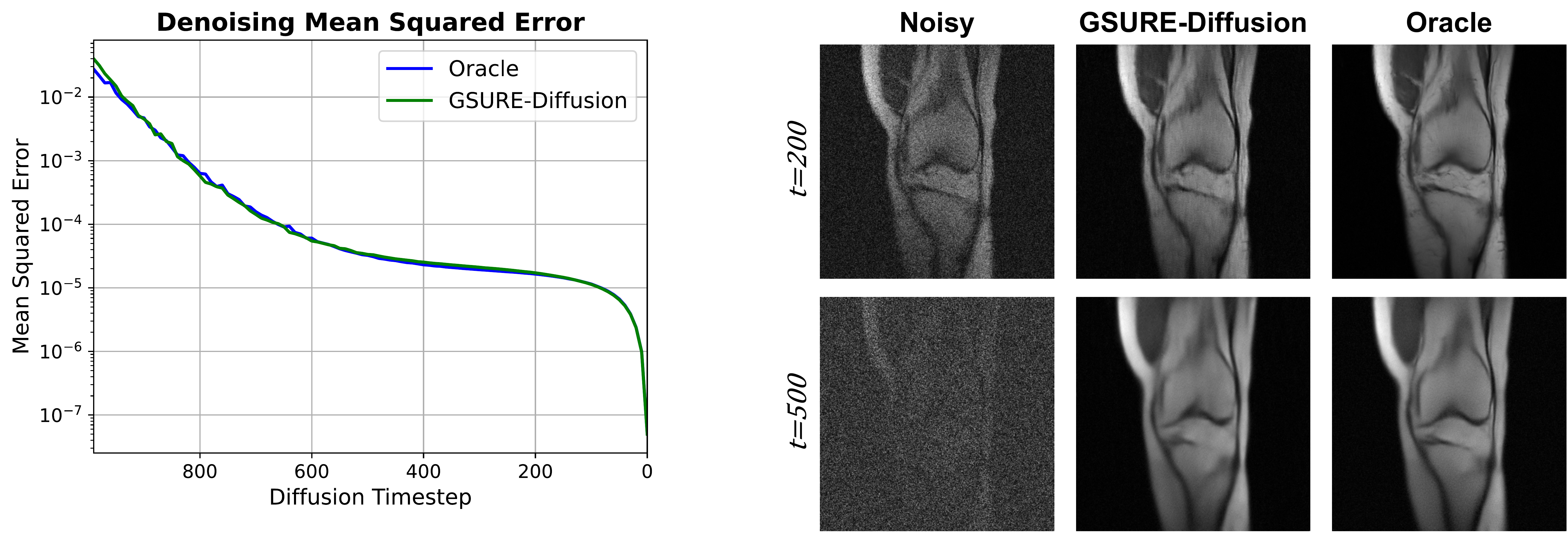}
    \caption{Left: Denoising MSE (on fully sampled noisy images) for the GSURE-Diffusion and oracle models across diffusion timesteps. Right: Qualitative denoising examples for both models.}
    \label{fig:mri_denoising}
\end{figure}

Magnetic Resonance Imaging (MRI) is a ubiquitous non-invasive medical imaging modality that can provide life-saving diagnostic information.
MRI measurements are obtained in the Fourier spectrum (also called k-space) of an object with magnetic fields.
However, measuring the entire k-space can be time-consuming and expensive.
Therefore, inferring the scanned object image based on partial, possibly randomized and noisy, k-space measurements from accelerated MRI scans is a highly relevant and challenging inverse problem, drawing considerable research attention~\citep{wang2016accelerating, hammernik2018learning, lee2018deep, han2019k, weiss2021pilot, wang2022one}.

The accelerated MRI procedure satisfies the assumptions we introduced in \autoref{sec:problem-formulation} (see \autoref{sec:data}), with the discrete Fourier transform as $\mV^\top$, and the random subsampling mask as $\mSigma$.
We use this fact and train a generative diffusion model for MRI based solely on accelerated scans.
We train on $24,853$ scanned slices from the fastMRI~\citep{knoll2020fastmri, zbontar2019fastmri} single-coil knee MRI dataset, center-cropped to a spatial size of $320\times320$.
The accelerated MRI subsampling process is simulated following~\citep{jalal2021robust} with an acceleration factor $R=4$, randomized subsampling of high frequencies, and AWGN with $\sigma_0=0.01$.
To facilitate training, real and imaginary elements of complex numbers are treated as separate input/output channels.
We use the same U-Net~\citep{unet, ho2020denoising} architecture from our CelebA~\citep{liu2015celeba} experiments, slightly modified to match the data dimensions,
and train a diffusion model on the corrupted measurements. A separate oracle model is trained with the same hyperparameters (detailed in \autoref{sec:impl}) on the fully sampled data.

To evaluate the validity of our approach, 
we measure the mean squared error (MSE) of both models in denoising $1024$ fully sampled MR images from the fastMRI~\citep{knoll2020fastmri, zbontar2019fastmri} validation set for different diffusion timesteps.
In \autoref{fig:mri_denoising}, we observe that the denoising ability of GSURE-Diffusion resembles that of the oracle model.
This supports our claim that GSURE-Diffusion can be a suitable replacement for the oracle model despite having trained exclusively on corrupted data.
To test this claim in real-world applications,
we substitute the oracle model for its GSURE-Diffusion counterpart in accelerated MRI reconstruction -- restoring MR images from noisy subsampled versions.
We use the general-purpose diffusion-based inverse problem solver DDRM~\citep{kawar2022denoising} for this task with $\eta=0$ and $100$ steps.
In \autoref{fig:mri_recon_4}, we observe that GSURE-Diffusion achieves similar performance to the oracle for $R=4$.
In all MRI experiments, we also present the ``zero-filled'' result (showing the MR image with zeroes in the missing frequencies) as a baseline.

Moreover, since we train a foundational MRI generative model, it is not restricted to the degradation setting present in its training data.
For instance, the model can generalize well for higher acceleration factors, as we show in \autoref{fig:mri_recon_gen}.
GSURE-Diffusion can also be utilized to reconstruct MR images corrupted by subsampling masks of different characteristics, such as 1-dimensional Gaussian random sampling and variable density Poisson disc sampling, as we show in \autoref{fig:mri_other}.
Furthermore, as a generative model, GSURE-Diffusion can provide uncertainty estimates for its outputs.
We follow~\citep{chung2022score} and quantify the uncertainty using the standard deviation of $8$ stochastic outputs made by the model.
We add synthetic Gaussian noise to MR images with $\sigma_0 = 0.4$,
and show uncertainty quantification results using GSURE-Diffusion and the oracle model in \autoref{fig:mri_uncertainty}.
This uncertainty quantification technique can potentially aid medical practitioners, providing clues towards anomalous regions in MRI scans. 
These results present evidence that a generative model trained on corrupted data can be deployed in various applications.
By loosening the requirements on the quality of the training data, we significantly reduce the cost of data acquisition for model training.



\begin{figure}[t]
    \centering
    \includegraphics[width=0.92\textwidth]{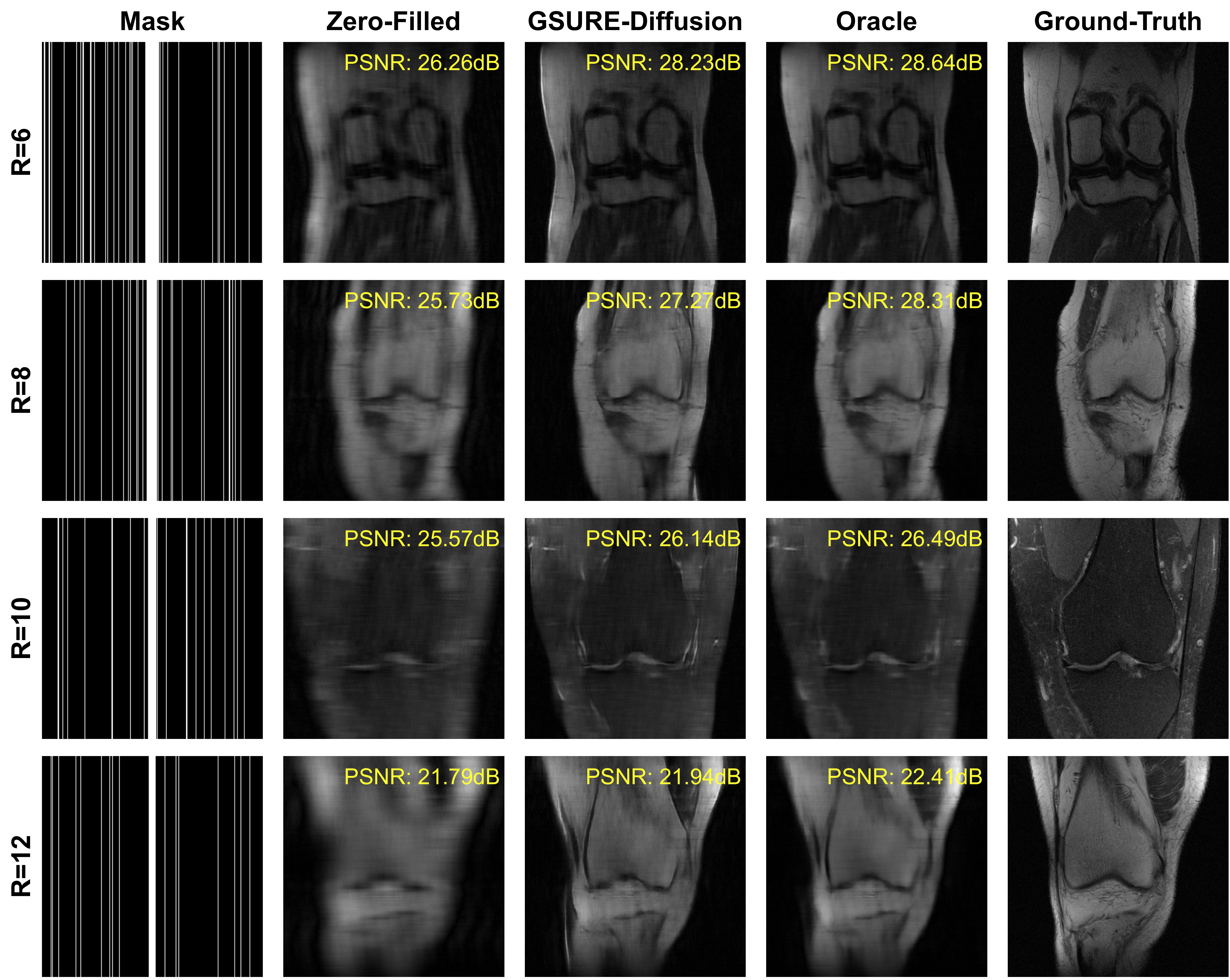}
    \caption{Accelerated MRI reconstruction results for $R \in \{6,8,10,12\}$ and $\sigma_0=0.01$. GSURE-Diffusion can generalize well across different acceleration factors.}
    \label{fig:mri_recon_gen}
\end{figure}

\section{Limitations}
While GSURE-Diffusion achieves impressive results, it suffers from a few limitations.
First, the assumptions over the available dataset and its acquisition procedure listed in \autoref{sec:problem-formulation} and \autoref{sec:gsure-loss}, which are satisfied in our example for accelerated MRI, may not always hold in other real-world scenarios.
Second, if the measurement noise $\zbar$ has high variance in at least one entry, the minimum noise level in the diffusion process, $\bar{\alpha}_1$, would also need to be high. This may lead to poor performance, as the final step of the generative diffusion process will need to clean significant noise.
Third, if the distribution of $\mP$ in the dataset is heavily biased towards certain regions, or has high variance in terms of $\mP$'s rank, this can break the diffusion model's ability to infer $\mP$ from $\xbar_t$, and thus degrade the generative performance.

The latter two limitations can be mitigated in future work by designing a diffusion model architecture that is explicitly aware of the ``mask'' $\mP$ for each input, and can handle $\xbar_t$ with a diagonal covariance with different values in the main diagonal.
Similar ideas (input masks and diagonal covariance) have been suggested in fully supervised diffusion modeling literature~\citep{bao2022estimating, gao2023masked}, and would be interesting to explore in the corrupted data setting.


\section{Related Work}

There has been a rich vein of works on unsupervised learning from datasets of corrupted data~\citep{lehtinen2018noise2noise, batson2019noise2self, hendriksen2020noise2inverse, chen2022robust}, including several SURE- and GSURE-based approaches~\citep{soltanayev2018training, nguyen2020hyperspectral, jo2021rethinking, metzler2018unsupervised, zhussip2019training, aggarwal2022ensure, abu2022image, liu2020rare}.
While they achieve impressive results, these efforts are mostly focused on learning a specific task. In contrast, our approach learns a foundational generative model, making it suitable for a wide range of applications. While generative modeling methods which learn only using corrupted data have been proposed in the past~\citep{mattei2019miwae, cheng2019misgan}, our method is the first to provide a holistic solution for learning from data corrupted beyond missing pixels, with general linear degradations.

\begin{figure}[t]
    \centering
    \includegraphics[width=0.85\textwidth]{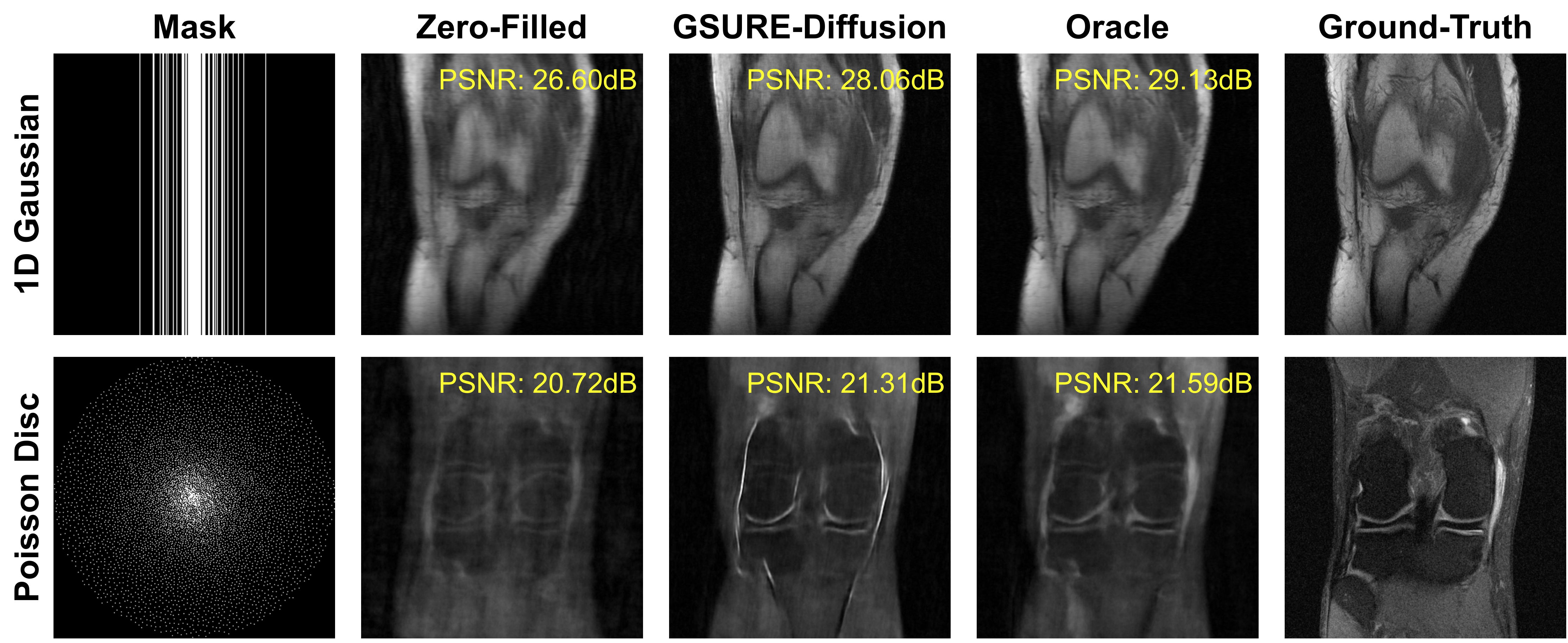}
    \caption{MRI reconstruction results for 1-dimensional Gaussian random sampling ($R=8$) and variable density Poisson disc sampling ($R=15$), both with $\sigma_0=0.01$. Our method generalizes well for different sampling schemes.}
    \label{fig:mri_other}
\end{figure}

\begin{figure}[t]
    \centering
    \hspace*{0.9em}\includegraphics[width=0.95\textwidth]{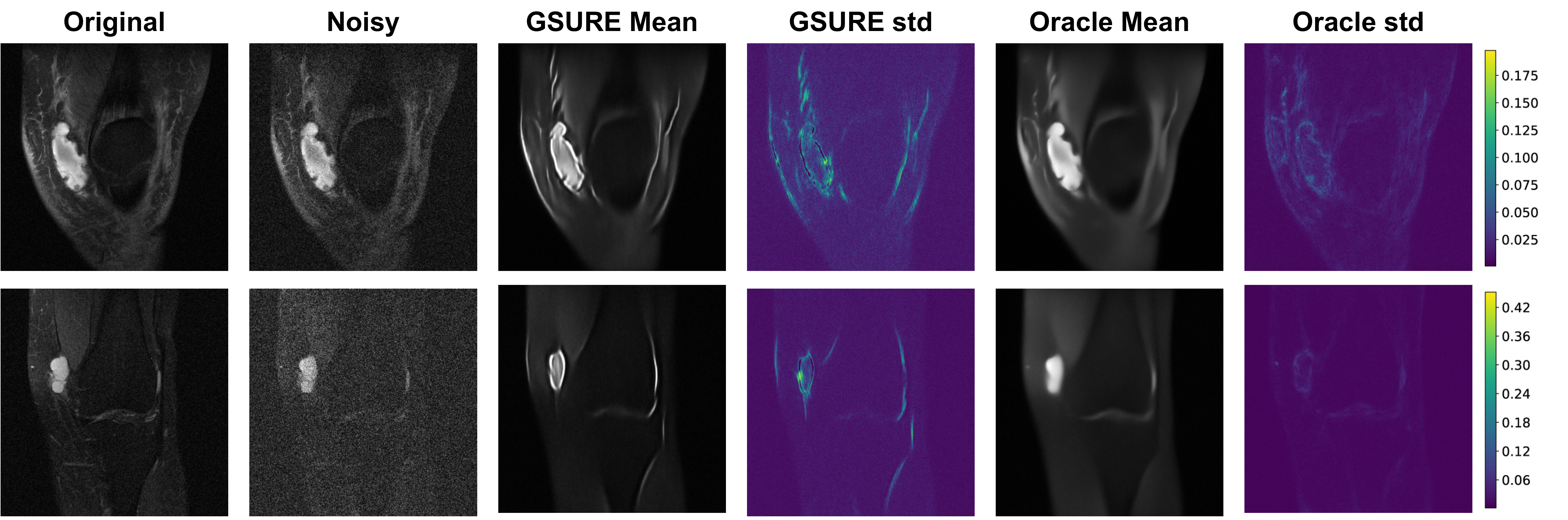}
    \caption{Uncertainty quantification for MR image denoising with GSURE-Diffusion and oracle models. Means and standard deviations are calculated for $8$ stochastic diffusion reconstructions.}
    \label{fig:mri_uncertainty}
\end{figure}

Recent advances in diffusion-based generative modeling~\citep{ho2020denoising, dhariwal2021diffusion} have enabled subsequent work to adapt these models for medical imaging (\textit{e.g.}, MRI)~\citep{song2023solving, chung2022score, jalal2021robust, xie2022measurement, chung2023solving}.
These diffusion models can serve a multitude of tasks, but can be expensive to train as they require fully sampled noiseless training data.
Notably, DDM$^2$~\citep{xiangddm} and the concurrent work of SURE-Score~\citep{aali2023solving} offer ways to train a diffusion model based on noisy data, which is often the case in practical settings. 
Similarly, a concurrent work by \cite{daras2024ambient} offers a method for training diffusion models based on data with missing pixels, showing improved results for downstream inpainting tasks.
However, collected data can also be undersampled or corrupted by other transformations.
Our proposed framework is more general, as it can handle both Gaussian noise and general linear corruptions.

Diffusion models have permeated various research areas, including online decision making~\citep{hsieh2023thompson}.
In that context, obtaining full noiseless data may often be impossible.
The authors of~\citep{hsieh2023thompson} suggest a diffusion loss function that learns a diffusion-based prior from noisy data with missing elements, similar to an image inpainting problem. They note that their proposed loss function could be of independent interest in future work.
Our loss function closely resembles theirs, albeit generalizing for linear corruptions beyond inpainting by utilizing the singular value decomposition (SVD).


\section{Conclusion}
We have introduced GSURE-Diffusion, a technique for training generative diffusion models based on data corrupted by linear degradations and additive Gaussian noise.
Using the SVD of the degradation operator, GSURE~\citep{eldar2008generalized}, and an ensemble of degradations~\citep{aggarwal2022ensure}, we introduce a novel training scheme that approximates the underlying data distribution from corrupted measurements.
We perform experiments on CelebA~\citep{liu2015celeba} to demonstrate the validity of our proposed framework.
Additionally, we show the applicability of GSURE-Diffusion to real-world problems by training a model on accelerated (undersampled and noisy) MRI scans from fastMRI~\citep{zbontar2019fastmri, knoll2020fastmri}.
We then use the resulting diffusion model to address several downstream tasks.

We hope that our proposed framework will enable future work to train generative models on similar problems such as multi-coil accelerated MRI, sparse-view Computed Tomography (CT), and more.
Future work may also address additional challenging measurement acquisition scenarios, as dictated by the data modality.
We emphasize the fact that our results are obtained on simulated corruptions. More extensive experiments on real data should be conducted before GSURE-Diffusion can be relied upon for diagnostics in clinical settings.

\section*{Acknowledgements}
This research was partially supported by the Council For Higher Education - Planning \& Budgeting Committee. Data used in the preparation of this article were obtained from \href{https://fastmri.med.nyu.edu/}{the NYU fastMRI Initiative database}~\cite{zbontar2019fastmri, knoll2020fastmri}. As such, NYU fastMRI investigators provided data but did not participate in analysis or writing of this report. A listing of NYU fastMRI investigators, subject to updates, can be found at \href{https://fastmri.med.nyu.edu/}{fastmri.med.nyu.edu}.
The primary goal of fastMRI is to test whether machine learning can aid in the reconstruction of medical images.

\newpage

\bibliography{refs}
\bibliographystyle{tmlr}

\newpage
\appendix
\section{Proposition Proofs}
\label{sec:proofs}

\ensemble*
\begin{proof}
We focus on the expectation term from \autoref{eq:projected-diffusion-loss}, which is taken over ${\xbar_t \sim q(\xbar_t | \xbar, \mP)},\ {\xbar \sim q(\xbar)},\ {\mP \sim q_P(\mP)}$ with unknown $q(\xbar), q_P(\mP)$. Namely,
\begin{align}
    & \bbE \left[ \left\lVert \mW \mP \left(  f_\theta^{(t)}(\xbar_t) - \xbar \right) \right\rVert_2^2 \right] \nonumber \\
    \stackrel{1}{=}\  & \bbE \left[ \mathrm{Trace} \left(
    \mW \mP \left(  f_\theta^{(t)}(\xbar_t) - \xbar \right)
    \left(  f_\theta^{(t)}(\xbar_t) - \xbar \right)^\top \mP \mW
    \right) \right] \nonumber \\
    \stackrel{2}{=}\  & \bbE \left[ \mathrm{Trace} \left(
    \mP \mW^2 \mP \left(  f_\theta^{(t)}(\xbar_t) - \xbar \right)
    \left(  f_\theta^{(t)}(\xbar_t) - \xbar \right)^\top
    \right) \right] \nonumber \\
    \stackrel{3}{=}\  & \bbE \left[ \mathrm{Trace} \left(
    \mW^2 \mP^2 \left(  f_\theta^{(t)}(\xbar_t) - \xbar \right)
    \left(  f_\theta^{(t)}(\xbar_t) - \xbar \right)^\top
    \right) \right] \nonumber \\
    \stackrel{4}{=}\  & \bbE \left[ \mathrm{Trace} \left(
    \mW^2 \mP \left(  f_\theta^{(t)}(\xbar_t) - \xbar \right)
    \left(  f_\theta^{(t)}(\xbar_t) - \xbar \right)^\top
    \right) \right] \nonumber \\
    \stackrel{5}{=}\  & \mathrm{Trace} \left( \bbE \left[
    \mW^2 \mP \left(  f_\theta^{(t)}(\xbar_t) - \xbar \right)
    \left(  f_\theta^{(t)}(\xbar_t) - \xbar \right)^\top
    \right] \right) \nonumber \\
    \stackrel{6}{=}\  & \mathrm{Trace} \left( \mW^2 \bbE\left[\mP\right] \bbE \left[
    \left(  f_\theta^{(t)}(\xbar_t) - \xbar \right)
    \left(  f_\theta^{(t)}(\xbar_t) - \xbar \right)^\top
    \right] \right) \nonumber \\
    \stackrel{7}{=}\  & \mathrm{Trace} \left( \bbE \left[
    \left(  f_\theta^{(t)}(\xbar_t) - \xbar \right)
    \left(  f_\theta^{(t)}(\xbar_t) - \xbar \right)^\top
    \right] \right) \nonumber \\
    \stackrel{5}{=}\  & \bbE \left[ \mathrm{Trace} \left(
    \left(  f_\theta^{(t)}(\xbar_t) - \xbar \right)
    \left(  f_\theta^{(t)}(\xbar_t) - \xbar \right)^\top
    \right) \right] \nonumber \\
    \stackrel{8}{=}\  & \bbE \left[ \left\lVert
    f_\theta^{(t)}(\xbar_t) - \xbar \right\rVert_2^2\right].
    \label{eq:proof-diff-loss}
\end{align}
Justifications:
\begin{enumerate}
    \item Using the linear algebra property $\lVert\rvv\rVert_2^2 = \mathrm{Trace}\left(\rvv \rvv^\top\right)$ for any vector $\rvv$, and ${\mP = \mP^\top}, {\mW = \mW^\top}$ because they are diagonal matrices.
    \item Using the cyclical shift invariance of the trace operator, $\mathrm{Trace}\left(\mA \mB \mC\right) = \mathrm{Trace}\left(\mC \mA \mB\right)$.
    \item Diagonal matrices (such as $\mP$ and $\mW$) commute with one another.
    \item Since $\mP$ is a diagonal matrix whose values are either zeroes or ones, it holds $\mP^2 = \mP$.
    \item For any random matrix $\mA$, it holds that $\bbE\left[\mathrm{Trace}\left(\mA\right)\right] = \mathrm{Trace}\left(\bbE\left[\mA\right]\right)$. 
    \item $\mW^2$ is a constant and can therefore be taken out of the expectation.
    Additionally, we use the assumption that the denoiser's error $\left( f_\theta^{(t)}(\xbar_t) - \xbar \right)$ is independent of $\mP$.
    \item $\mW$ is defined as $\bbE[\mP]^{-\frac{1}{2}}$. This results in $\mW^2 \mP = \mI$.
    \item Using the linear algebra property $\lVert\rvv\rVert_2^2 = \mathrm{Trace}\left(\rvv \rvv^\top\right)$ for any vector $\rvv$.
\end{enumerate}
\autoref{eq:proof-diff-loss} is identical to the expectation term in \autoref{eq:diffusion-loss}, except for the distribution of $\xbar_t$ considered in the expectation.
\autoref{eq:proof-diff-loss} considers ${\xbar_t \sim q(\xbar_t | \xbar, \mP) = \gN(\sqrt{\bar{\alpha}_t} \mP \xbar, (1 - \bar{\alpha}_t) \mI})$, whereas \autoref{eq:diffusion-loss} considers ${\xbar_t \sim q^*(\xbar_t | \xbar) = \gN(\sqrt{\bar{\alpha}_t} \xbar, (1 - \bar{\alpha}_t) \mI})$.
We assume that the neural network $f_\theta^{(t)}(\xbar_t)$ is able to infer $\mP$ from $\xbar_t$, and can also tailor its output for each $\mP$ including $\mP = \mI$, matching $q^*(\xbar_t | \xbar)$.
Under these assumptions, both expectations share the same minimizer, thereby completing the proof.
A similar proof is presented in ENSURE~\cite{aggarwal2022ensure}.
\end{proof}

\unbiased*
\begin{proof}
We utilize a weighted version of the generalized SURE~\cite{eldar2008generalized, aggarwal2022ensure} presented in \autoref{eq:gsure},
\begin{equation}
\label{eq:weighted-gsure}
    \bbE\left[ \left\lVert \mW \mP \left(f(\rvy) - \rvx \right) \right\rVert_2^2 \right] =
    \bbE\left[ \left\lVert \mW \mP \left(f(\rvy) - \rvx_\mathrm{ML} \right) \right\rVert_2^2 \right]
    + 2 \bbE\left[ \nabla_{\mH^\top \mC^{-1} \rvy} \cdot \mW^2 \mP f(\rvy) \right]
    + c.
\end{equation}
This weighted GSURE considers the measurement equation $\rvy = \mH \rvx + \rvz$ with $\rvz \sim \gN(0, \mC)$, with $\mP = \mH^\dagger \mH$, $\rvx_\mathrm{ML} = \left(\mH^\top \mC^{-1} \mH\right)^\dagger \mH^\top \mC^{-1} \rvy$, and a constant $c$.
We consider the measurement equation matching \autoref{eq:xbart-marginal}, namely $\xbar_t = \sqrt{\bar{\alpha}_t}\mP \xbar + \zbar_t$ with ${\zbar_t \sim \gN(0, (1 - \bar{\alpha}_t) \mI)}$, which is a special case.
For these measurements, the left-hand-side in \autoref{eq:weighted-gsure} becomes
\begin{align*}
    \bbE\left[ \left\lVert \mW \left(\sqrt{\bar{\alpha}_t} \mP\right)^\dagger \left(\sqrt{\bar{\alpha}_t} \mP\right) \left(f(\xbar_t) - \xbar \right) \right\rVert_2^2 \right] = \bbE\left[ \left\lVert \mW \mP \left(f(\xbar_t) - \xbar \right) \right\rVert_2^2 \right],
\end{align*}
which is identical to the expectation term in \autoref{eq:projected-diffusion-loss}.
This equation holds because $\mP$ is a diagonal matrix with ones and zeroes, resulting in ${\mP^\dagger = \mP = \mP^2}$.
Meanwhile, by substituting $\mH = \sqrt{\bar{\alpha}_t} \mP$, $\mC = (1 - \bar{\alpha}_t) \mI$, and $\rvy = \xbar_t$,
$\rvx_\mathrm{ML}$ simplifies into
\begin{align*}
    \rvx_\mathrm{ML} & = \left(\sqrt{\bar{\alpha}_t} \mP^\top \left((1 - \bar{\alpha}_t) \mI\right)^{-1} \sqrt{\bar{\alpha}_t} \mP\right)^\dagger \sqrt{\bar{\alpha}_t} \mP^\top \left((1 - \bar{\alpha}_t) \mI\right)^{-1} \xbar_t \\
    & = \left(\frac{\bar{\alpha}_t}{1 - \bar{\alpha}_t} \mP^\top \mP\right)^\dagger
    \frac{\sqrt{\bar{\alpha}_t}}{1 - \bar{\alpha}_t} \mP^\top \xbar_t \\
    & = \frac{1 - \bar{\alpha}_t}{\bar{\alpha}_t} \mP^\dagger \left(\mP^\top\right)^\dagger
    \frac{\sqrt{\bar{\alpha}_t}}{1 - \bar{\alpha}_t} \mP^\top \xbar_t \\
    & = \frac{1}{\sqrt{\bar{\alpha}_t}} \mP \mP \mP \xbar_t  = \frac{1}{\sqrt{\bar{\alpha}_t}} \mP \xbar_t.
\end{align*}
The last two equalities hold because ${\mP^\dagger = \mP^\top = \mP = \mP^2}$.
Finally, the right-hand-side in \autoref{eq:weighted-gsure} becomes
\begin{align*}
    & \bbE\left[ \left\lVert \mW \mP^\dagger \mP \left(f(\xbar_t) - \frac{1}{\sqrt{\bar{\alpha}_t}} \mP \xbar_t \right) \right\rVert_2^2 \right]
    + 2 \bbE\left[ \nabla_{\mP^\top \left((1 - \bar{\alpha}_t) \mI\right)^{-1} \xbar_t} \cdot \mW^2 \mP^\dagger \mP f(\xbar_t) \right]
    + c \\
    \stackrel{1}{=}\  & \bbE\left[ \left\lVert \mW \mP \left(f(\xbar_t) - \frac{1}{\sqrt{\bar{\alpha}_t}} \xbar_t \right) \right\rVert_2^2 \right]
    + 2 \bbE\left[ \nabla_{\mP \left((1 - \bar{\alpha}_t) \mI\right)^{-1} \xbar_t} \cdot \mW^2 \mP f(\xbar_t) \right] + c \\
    =\  & \bbE\left[ \left\lVert \mW \mP \left(f(\xbar_t) - \frac{1}{\sqrt{\bar{\alpha}_t}} \xbar_t \right) \right\rVert_2^2 \right]
    + 2 \bbE\left[ \nabla_{\left(1/\left(1 - \bar{\alpha}_t\right)\right) \mP \xbar_t} \cdot \mW^2 \mP f(\xbar_t) \right]
    + c \\
    \stackrel{2}{=}\  & \bbE\left[ \left\lVert \mW \mP \left(f(\xbar_t) - \frac{1}{\sqrt{\bar{\alpha}_t}} \xbar_t \right) \right\rVert_2^2 \right]
    + 2 \bbE\left[ \left(1 - \bar{\alpha}_t\right) \nabla_{\mP \xbar_t} \cdot \mW^2 \mP f(\xbar_t) \right]
    + c \\
    \stackrel{3}{=}\  & \bbE\left[ \left\lVert \mW \mP \left(f(\xbar_t) - \frac{1}{\sqrt{\bar{\alpha}_t}} \xbar_t \right) \right\rVert_2^2 \right]
    + 2 \bbE\left[ \left(1 - \bar{\alpha}_t\right) \nabla_{\xbar_t} \cdot \mP \mW^2 f(\xbar_t) \right]
    + c \\
    \stackrel{4}{=}\  & \bbE\left[ \left\lVert \mW \mP \left(f(\xbar_t) - \frac{1}{\sqrt{\bar{\alpha}_t}} \xbar_t \right) \right\rVert_2^2
    + 2 \left(1 - \bar{\alpha}_t\right) \nabla_{\xbar_t} \cdot \mP \mW^2 f(\xbar_t)
    + c \right],
\end{align*}
which is identical to the expectation term in \autoref{eq:gsure-diff-loss} with $\lambda_t = 1 - \alpha_t$. Justifications:
\begin{enumerate}
    \item ${\mP^\dagger = \mP^\top = \mP = \mP^2}$.
    \item Using the change of variables formula.
    \item Diagonal matrices (such as $\mP$ and $\mW$) commute with one another. Additionally, the divergences w.r.t. $\xbar_t$ and w.r.t. $\mP \xbar_t$ are identical, because $\mP \mW^2 f(\xbar_t)$ equals zero in entries where $\mP$ is zero, and $\mP \xbar_t$ and $\xbar_t$ are identical in entries where $\mP$ is non-zero.
    \item Using the linearity of the expectation operator.
\end{enumerate}
By rewriting both sides of \autoref{eq:weighted-gsure}, we obtain that \autoref{eq:gsure-diff-loss} equals \autoref{eq:projected-diffusion-loss}.
\end{proof}

\section{Detailed Data Descriptions}
\label{sec:data}

\subsection{Dataset Collection}
Here, we detail the collection process for the training and testing data in our experiments.
Note that the data described here is what we consider pristine uncorrupted data.
The corruption process for training GSURE-Diffusion is detailed in \autoref{sec:apdx-deg}.

\paragraph{CelebA.}
In our experiments on human face images,
we use images from the CelebA~\cite{liu2015celeba} dataset.
The original CelebA images were center-cropped to $128 \times 128$ pixels, then resized to $32 \times 32$ pixels, and finally turned into grayscale.
The images were converted to grayscale by averaging all color channels.
Overall, the dataset includes $162770$ training set images, and $19867$ validation set images (which we use for FID~\cite{fid} evaluations).

\paragraph{FastMRI.}
We consider all single-coil knee MRI scans from the fastMRI~\cite{zbontar2019fastmri, knoll2020fastmri} dataset, excluding slices with indices below $10$ or above $40$ as they generally contain less interpretable information.
This yields a training set size of $24853$. For the  validation set we only use the first $1024$ valid slices (which we use for all our post-training experiments).
We treat each slice as a $2$-channel image, separating the complex values into real and imaginary channels.
We center-crop the images to a spatial size of $320\times320$ following~\cite{jalal2021robust}, and normalize the images by $7e-5$ to obtain better neural network performance.
When displaying MR images, we take the absolute value of the complex number in each pixel, and then use min-max normalization to view the resulting values as a grayscale image.
In \autoref{fig:mri_uncertainty}, we jointly normalize standard deviations to ensure a fair visual comparison. We attach a color bar to accurately illustrate the standard deviation intensities.

\subsection{Data Corruptions for GSURE-Diffusion}
\label{sec:apdx-deg}

\paragraph{CelebA.}
In our CelebA~\cite{liu2015celeba} experiments, we consider a degradation operator $\mH$ that randomly drops each $4\times4$-pixel patch with probability $p$.
This operator can be mathematically defined as a diagonal matrix $\mH$ with zeroes in pixels that are dropped and ones in pixels that are kept.
The singular value decomposition (SVD) is trivially and efficiently obtained by
\begin{equation}
    \mH = \mI \mH \mI.
\end{equation}
Note that the singular values in $\mH$ are not ordered.
Since the SVD of $\mH$ has $\mV^\top = \mI$ regardless of the randomness of dropping patches, this family of random operators $\mH$ matches our assumption that all $\mH$ share the same left-singular vectors $\mV^\top$.

Additionally, the projection matrix $\mP = \mH^\dagger \mH$ is simply $\mH$ (as $\mH$ is diagonal with zeroes and ones).
Because each patch is dropped randomly with probability $p$, it follows that $\bbE[\mP] = (1 - p) \mI \succ 0$ is positive definite, matching our assumption. 

Finally, we assume $\mH$ and the additive white Gaussian noise standard deviation $\sigma_0$ to be known for all measurements in the dataset.
For simplicity, we assume a uniform $\sigma_0$ for all measurements.

\paragraph{FastMRI.}
For MRI slices from fastMRI~\cite{zbontar2019fastmri, knoll2020fastmri}, the degradation operator we use is the horizontal frequency subsampling operator used in~\cite{jalal2021robust}.
For an acceleration factor $R$, the degradation operator $\mH$ keeps the central $120/R$ frequencies, and then uniformly samples an additional ${200/R}$ frequencies.
This results in a sampling of $320/R$ frequencies out of the original $320$.
More formally, 
\begin{equation}
\label{eq:mri-svd}
    \mH = \mI \mSigma \mF,
\end{equation}
where $\mF$ is the discrete Fourier transform matrix, and $\mSigma$ is a square diagonal matrix containing ones for frequency indices that are kept by $\mH$, and zeroes elsewhere.
Incidentally, \autoref{eq:mri-svd} is a valid SVD of $\mH$, and can be efficiently simulated using the fast Fourier transform algorithm.

This operator matches our assumptions:
(i) We assume each $\mH$ and the additive white Gaussian noise standard deviation $\sigma_0$ to be known;
(ii) All matrices $\mH$ share the same left-singular vectors defined by $\mF$ (and not depending on the randomness); and
(iii) The central $120/R$ horizontal frequencies are always sampled, and each of the remaining frequencies are equally likely to be sampled, with probability ${200}/({320R - 120})$. Thus $\bbE[\mP]=\bbE[\mSigma^\dagger \mSigma]$ is a diagonal matrix with nonzero diagonal values, making it positive definite.


\section{Implementation Details}
\label{sec:impl}

Our experiments were conducted using DDPM~\cite{ho2020denoising} U-Net architecture with base channel width $128$. All networks were trained using the Adam optimizer, 
dropout with probability $0.1$, EMA with decay factor of $0.9999$.
The diffusion process considered in training for all experiments has $1000$ timesteps, with a linear $\beta$ schedule ranging from $\beta_1 = \sigma_0^2$ ($\sigma_0^2$ is the variance of the AWGN in the data) to $\beta_{1000} = 0.2$.
All experiments were conducted on $8$ NVIDIA A40 GPUs.

In the human faces experiment we ignore the weighting matrix $\mW$ during training because the probability for each pixel to be masked is uniform.
For the knee MRI experiment the weighting matrix $\mW$ was set to $1$ for the centeral lines that were not masked by $\mH$, and $\sqrt{5.8}$ for all other lines matching their inverse square root masking probability (for $R=4$).

All models, including the oracle ones, were trained with the hyperparameters listed in \autoref{tab:hyperparams}.
The ``mean type'' hyperparameter refers to whether the neural network predicts the image $\rvx$ or the added noise $\rveps = \left(\rvx_t - \sqrt{\bar{\alpha}_t} \rvx\right)/\left(\sqrt{1 - {\bar{\alpha}_t}}\right)$.

\begin{table}[ht]
  \caption{Architecture and training hyperparameters for CelebA~\cite{liu2015celeba} and fastMRI~\cite{zbontar2019fastmri, knoll2020fastmri} experiments.}
  \label{tab:hyperparams}
  \begin{center}
  
  \begin{tabular}{l c c}
     &
    \textbf{CelebA} &
    \textbf{FastMRI} \\
    \hline
    \textbf{Iterations} & $180$,$000$ & $31$,$000$ \\
    \textbf{Batch Size} & $128$ & $32$ \\
    \textbf{Learning Rate} & $5e-5$ & $1e-5$ \\
    \textbf{Mean Type} & \texttt{predict\_x} & \texttt{predict\_epsilon} \\
    \textbf{Channel Multipliers} & $[1, 2, 2, 2, 4]$ & $[1, 1, 2, 2, 4, 4]$ \\
    \textbf{Attention Resolutions} & $[16]$ & $[20]$ \\
    $\gamma_t$ & $1$ & $\frac{\bar{\alpha}_t}{1 - \bar{\alpha}_t}$ \\
    $\lambda_t$ & $0.0001$ & $0.0001 \cdot \frac{1 - \bar{\alpha}_t}{\bar{\alpha}_t}$ \\

    
  \end{tabular}
  \end{center}
\end{table}

For the MRI model, we apply an inverse Fourier transform and a Fourier transform to the network's input and output respectively, to utilize the convolutional architecture's advantage on image data (rather than frequencies).
Due to the orthogonality and linearity of the Fourier transform and its inverse, the additive white Gaussian noise remains so, and maintains the same variance.

\Cref{alg:training} shows the GSURE-Diffusion training algorithm used in this paper. We provide our anonymized code and configuration files in the supplementary material.
We intend to publish the code along with our trained model checkpoints upon acceptance.

\begin{algorithm}
\DontPrintSemicolon
\SetKwFor{For}{for}{}{end for}%
 \KwIn{a dataset $\mathcal{D}$ of corrupted images $\mathbf{y}$ with known degradation matrices $\mathbf{H}$ and noise amplitudes $\sigma_0$, learning rate $\eta$, hyperparameters $\lambda_t, \bar{\alpha}_t, T$, and DNN $f_{\theta}^{(t)}(\bar{\mathbf{x}}_t)$ with initial parameters $\theta$.}
 Initialize $\bar{\mathcal{D}} \leftarrow \{\}$ \tcp*{precompute measurements before training}
 \For{$\mathbf{y}, \mathbf{H}, \sigma_0$ in $\mathcal{D}$}{
   $\mathbf{U}, \boldsymbol{\Sigma}, \mathbf{V}^\top \leftarrow \texttt{SVD}(\mathbf{H})$ \\
   $\mathbf{P} \leftarrow \boldsymbol{\Sigma}^\dagger \boldsymbol{\Sigma}$ \\
   $\bar{\mathbf{y}} \leftarrow \boldsymbol{\Sigma}^\dagger \mathbf{U}^\top \mathbf{y}$ \\
   store $\bar{\mathbf{y}}, \boldsymbol{\Sigma}, \mathbf{P}, \sigma_0$ in $\bar{\mathcal{D}}$
 }
 $\mathbf{W} \leftarrow \mathbb{E}_{\mathbf{P} \sim \bar{\mathcal{D}}}\left[ \mathbf{P} \right]$ \\
 \For{$N$ epochs \tcp*[f]{training loop}}{
   \For{$\bar{\mathbf{y}}, \boldsymbol{\Sigma}, \mathbf{P}, \sigma_0$ in $\bar{\mathcal{D}}$}{
     $t \sim \mathcal{U}[1, T], \boldsymbol{\epsilon}_t \sim \mathcal{N}(0, \mathbf{I})$ \\
     $\bar{\mathbf{x}}_t \leftarrow \sqrt{\bar{\alpha}_t} \bar{\mathbf{y}} + \left(\left(1 - \bar{\alpha}_t\right) \mathbf{I} - \bar{\alpha}_t \sigma_0^2 \boldsymbol{\Sigma}^\dagger \boldsymbol{\Sigma}^{\dagger\top}\right)^{\frac{1}{2}} \boldsymbol{\epsilon}_t$ \\
     $\mathcal{L}(\theta) \leftarrow \left\lVert \mathbf{W P} \left( f_{\theta}^{(t)}(\bar{\mathbf{x}}_t) - \bar{\mathbf{y}} \right) \right\rVert_2^2 + 2 \lambda_t \left( \nabla_{\bar{\mathbf{x}}_t} \cdot \mathbf{P} \mathbf{W}^2 f_{\theta}^{(t)}(\bar{\mathbf{x}}_t) \right)$ \\
    $\theta \leftarrow \theta - \eta \nabla_\theta \mathcal{L}(\theta)$
   }
 }
 \KwOut{$\theta$ \tcp*{trained network parameters}}
 \caption{(\textbf{V3qg}) GSURE-Diffusion training algorithm}
\label{alg:training}
\end{algorithm}

\section{Pragmatic Loss Function Considerations}
\label{sec:pragmatic}

\subsection{Divergence Term Estimation}
\label{sec:divergence}
The GSURE-Diffusion training loss in \autoref{eq:gsure-diff-loss} contains a divergence term, which is highly expensive to accurately obtain, in both memory consumption and computation time.
Similar to other SURE-based methods~\cite{metzler2018unsupervised, soltanayev2018training, aggarwal2022ensure}, we use an unbiased Monte Carlo approximation~\cite{ramani2008monte} of the divergence.
Considering the divergence as the trace of the Jacobian matrix $\mJ$ of the term being differentiated ($\mP \mW^2 f_\theta^{(t)}(\xbar_t)$) , Monte Carlo SURE~\cite{ramani2008monte} uses Hutchinson’s trace estimator~\cite{hutchinson1989stochastic}.
We compute the estimate by sampling a random Gaussian vector $\rvv \sim \gN(0, \mI)$, and calculating $\rvv^\top \mJ \rvv$ using automatic differentiation tools.
Notably, this differs from previous methods~\cite{metzler2018unsupervised, soltanayev2018training, aggarwal2022ensure} that used numerical estimates for differentiation, which may suffer from numerical inaccuracies.

\subsection{MSE Term Variance}
\label{sec:mse-variance}

The GSURE-Diffusion loss function in \autoref{eq:gsure-diff-loss} contains the following squared error term
\begin{equation*}
\left\lVert \mW \mP \left(  f_\theta^{(t)}(\xbar_t) - \frac{1}{\sqrt{\bar{\alpha_t}}} \xbar_t \right) \right\rVert_2^2.
\end{equation*}
Because of the possibly strong noise-to-signal ratio present in $\xbar_t$, this term may suffer from high variance, effectively impeding the training process.
To alleviate this, we propose replacing $\frac{1}{\sqrt{\bar{\alpha_t}}} \xbar_t$ with the less noisy $\ybar$, resulting in the loss function 
\begin{equation}
\label{eq:gsure-diff-loss-updated}
    \sum_{t=1}^{T} \gamma_t
    \bbE \left[ \left\lVert \mW \mP \left(  f_\theta^{(t)}(\xbar_t) - \ybar \right) \right\rVert_2^2
    + 2 \lambda_t \left( \nabla_{\xbar_t} \cdot \mP \mW^2 f_\theta^{(t)}(\xbar_t) \right)
    + c \right].
\end{equation}
Note that the difference between the expectations in \autoref{eq:gsure-diff-loss} and \autoref{eq:gsure-diff-loss-updated} is negligible, while \autoref{eq:gsure-diff-loss-updated} has significantly less variance (as $\ybar$ is less noisy than $\frac{1}{\sqrt{\bar{\alpha_t}}} \xbar_t$).
From \autoref{eq:noise-add} we get 
\begin{equation*}
    \frac{1}{\sqrt{\bar{\alpha_t}}} \xbar_t = \ybar + \frac{1}{\sqrt{\bar{\alpha_t}}} \left(\left(1 - \bar{\alpha}_t\right) \mI - \bar{\alpha}_t \sigma_0^2 \mSigma^\dagger \mSigma^{\dagger\top}\right)^{\frac{1}{2}} \rveps_t.
\end{equation*}
We denote $\bar{\rveps} = \frac{1}{\sqrt{\bar{\alpha_t}}} \left(\left(1 - \bar{\alpha}_t\right) \mI - \bar{\alpha}_t \sigma_0^2 \mSigma^\dagger \mSigma^{\dagger\top}\right)^{\frac{1}{2}} \rveps_t$, and show that
\begin{align*}
& \left\lVert \mW \mP \left(  f_\theta^{(t)}(\xbar_t) - \frac{1}{\sqrt{\bar{\alpha_t}}} \xbar_t \right) \right\rVert_2^2 \\
=\  & \left\lVert \mW \mP \left(  f_\theta^{(t)}(\xbar_t) - \ybar - \bar{\rveps}\right) \right\rVert_2^2 \\
=\  & \left\lVert \mW \mP \left(  f_\theta^{(t)}(\xbar_t) - \ybar\right) \right\rVert_2^2
+ \left\lVert \mW \mP \bar{\rveps} \right\rVert_2^2
- 2 \bar{\rveps}^\top \mP \mW \mW \mP \left(  f_\theta^{(t)}(\xbar_t) - \ybar\right) \\
=\  & \left\lVert \mW \mP \left(  f_\theta^{(t)}(\xbar_t) - \ybar\right) \right\rVert_2^2
- 2 \bar{\rveps}^\top \mP \mW^2 \mP  f_\theta^{(t)}(\xbar_t)
+ 2 \bar{\rveps}^\top \mP \mW^2 \mP \ybar
+ \left\lVert \mW \mP \bar{\rveps} \right\rVert_2^2. \\
\end{align*}
The final two terms are constants w.r.t. $\theta$.
Effectively, this means that the difference between the squared error terms in \autoref{eq:gsure-diff-loss} and \autoref{eq:gsure-diff-loss-updated} is $2 \bar{\rveps}^\top \mP \mW^2 \mP  f_\theta^{(t)}(\xbar_t)$.
Under the manifold hypothesis, if $f_\theta^{(t)}(\xbar_t)$ outputs valid images residing on the manifold, and because $\bar{\rveps}$ is a random Gaussian vector, $f_\theta^{(t)}(\xbar_t)$ and $\bar{\rveps}$ are perpendicular.
Therefore, 
the expected difference between the squared error terms, $\bbE\left[ 2 \bar{\rveps}^\top \mP \mW^2 \mP  f_\theta^{(t)}(\xbar_t) \right]$, is zero.
This motivates us to replace \autoref{eq:gsure-diff-loss} with \autoref{eq:gsure-diff-loss-updated}, resulting in significantly lower variance in the loss at little to no cost in terms of bias.

\begin{figure}[t]
    \centering
    \includegraphics[width=0.8\textwidth]{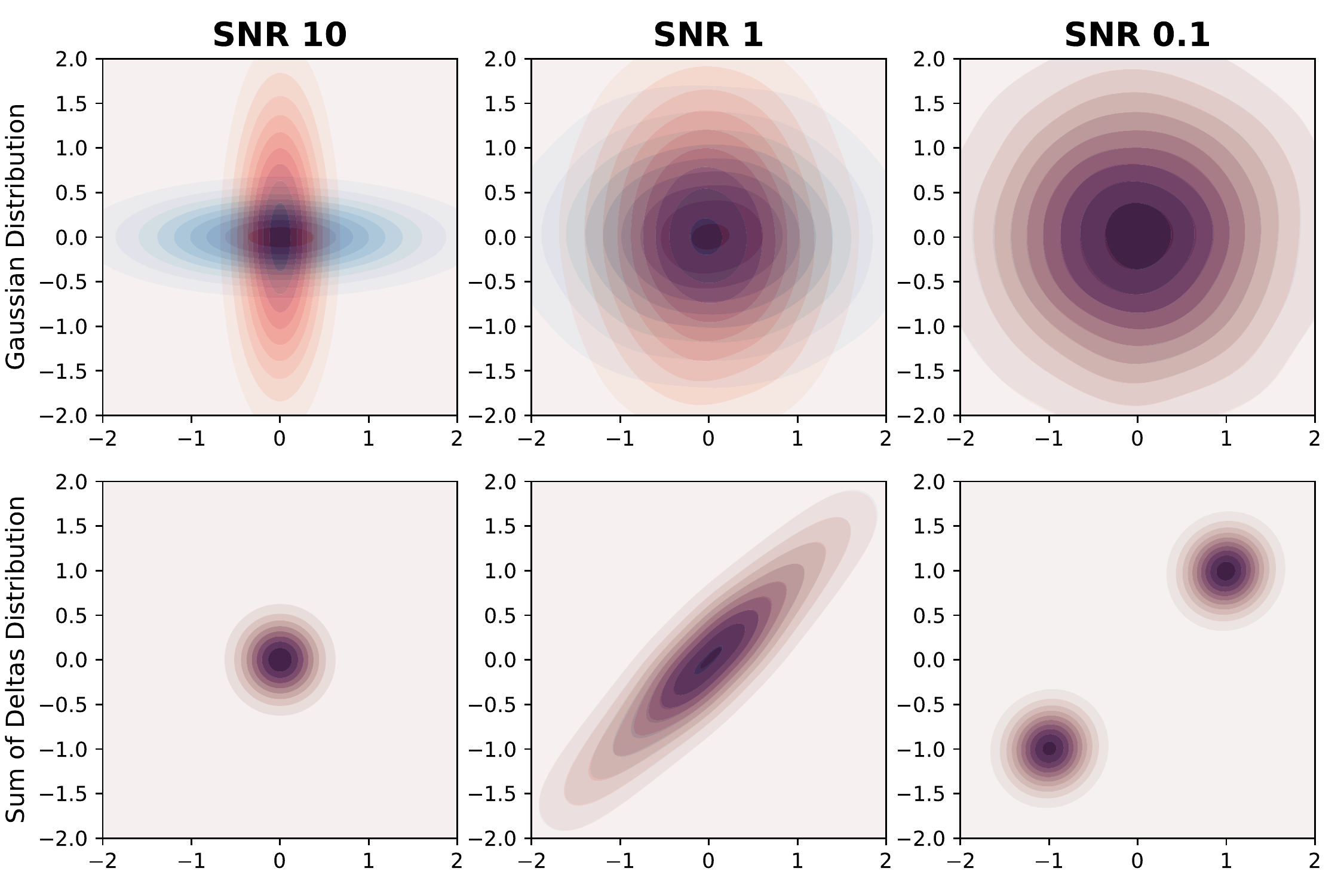} 
    \caption{Analytic example of independence assumptions. The two-dimensional distributions depicted in the graph consist of an isotropic Gaussian distribution (top) and a sum of two delta functions with equal probabilities centered around (1, 1) and (-1, -1) respectively (bottom). The graph illustrates the KDE of the error vectors from the denoiser, with two distinct colors representing two different types of data-point degradations. The blue color represents the denoiser error when the first element is masked, while the red color indicates the denoiser error when the last element is masked. At a low SNR, the KDE of both distributions appears identical when masking is applied to either element. As SNR increases, this independence is weakened in the isotropic Gaussian setting, but remains for the sum of deltas distribution.}
    \label{fig:analytic-example}
\end{figure}

\subsection{Assessment of Generalization Assumption}
\label{sec:generalization}
According to \autoref{thm:ensemble}, the GSURE-Diffusion loss function is equivalent to an MSE objective for samples $\xbar_t \sim \bbE_{\mP}[q(\xbar_t | \xbar, \mP)|\mP]$. During inference, we assume the network is capable of denoising samples from $\xbar_t \sim q^*(\xbar_t | \xbar)$. To asses the generalization capabilities of the network on such samples, we compare the output of a network trained with GSURE-Diffusion and with an oracle network trained on uncorrupted data. In \autoref{fig:generalization} we present the PSNR between the network outputs for denoising samples from $\xbar_t \sim q^*(\xbar_t | \xbar) = q(\xbar_t | \xbar, \mP = \mI)$. The graph shows that for most time-steps, the PSNR is higher than 45dB, leading us to conclude that the model trained with GSURE-Diffusion is an adequate denoiser for samples from $\xbar_t \sim q^*(\xbar_t | \xbar)$, and the assumption is valid.

\subsection{Assessment of Independence Assumption}
\label{sec:independence}
Our proof for \autoref{thm:ensemble} makes use of the assumption made in  ENSURE~\citep{aggarwal2022ensure}, that the network's denoising error is independent of the projection matrix $\mP$. 
To provide some insight into the limitations of this assumption, we begin with a simple example of two-dimensional distributions; one an isotropic multivariate Gaussian and the other a sum of two delta functions, sampled with equal probabilities. From each distribution we sample 10,000 samples $\xbar_t$ using \autoref{eq:noise-add}, where our degradation $\mP$ randomly masks one of the two coordinates with equal probability. We then compute the error $\left(  f_\theta^{(t)}(\xbar_t) - \xbar \right)$ for the two possible projections and examine whether the error is independent of the projection $\mP$. The probabilities of the error vector are depicted in Figure \ref{fig:analytic-example}, using kernel density estimation (KDE). From our analysis, we draw the following observations: In scenarios with high noise levels, the error probabilities are identical for either projection $\mP$, regardless of the distribution. This indicates that much of the information from the original data-point is lost due to the noise, rendering the projection largely ineffectual. For low noise values, the two distributions diverge. The error vectors created from points in the Gaussian distributions are highly correlated with $\mP$, while the ones created from the two deltas remain indistinguishable.  Error vectors generated from points in the Gaussian distribution are highly correlated with $\mP$, whereas those from the two delta functions remain indistinguishable. Hence, we infer that the assumption of independence between the network's denoising error and the projection matrix $\mP$ is dependent upon the underlying dataset and degradation family, necessitating empirical validation before applying our method.

\begin{figure}[t]
    \centering
    \begin{subfigure}[t]{0.49\textwidth}
    \centering\captionsetup{width=0.9\linewidth}
        \includegraphics[width=\textwidth]{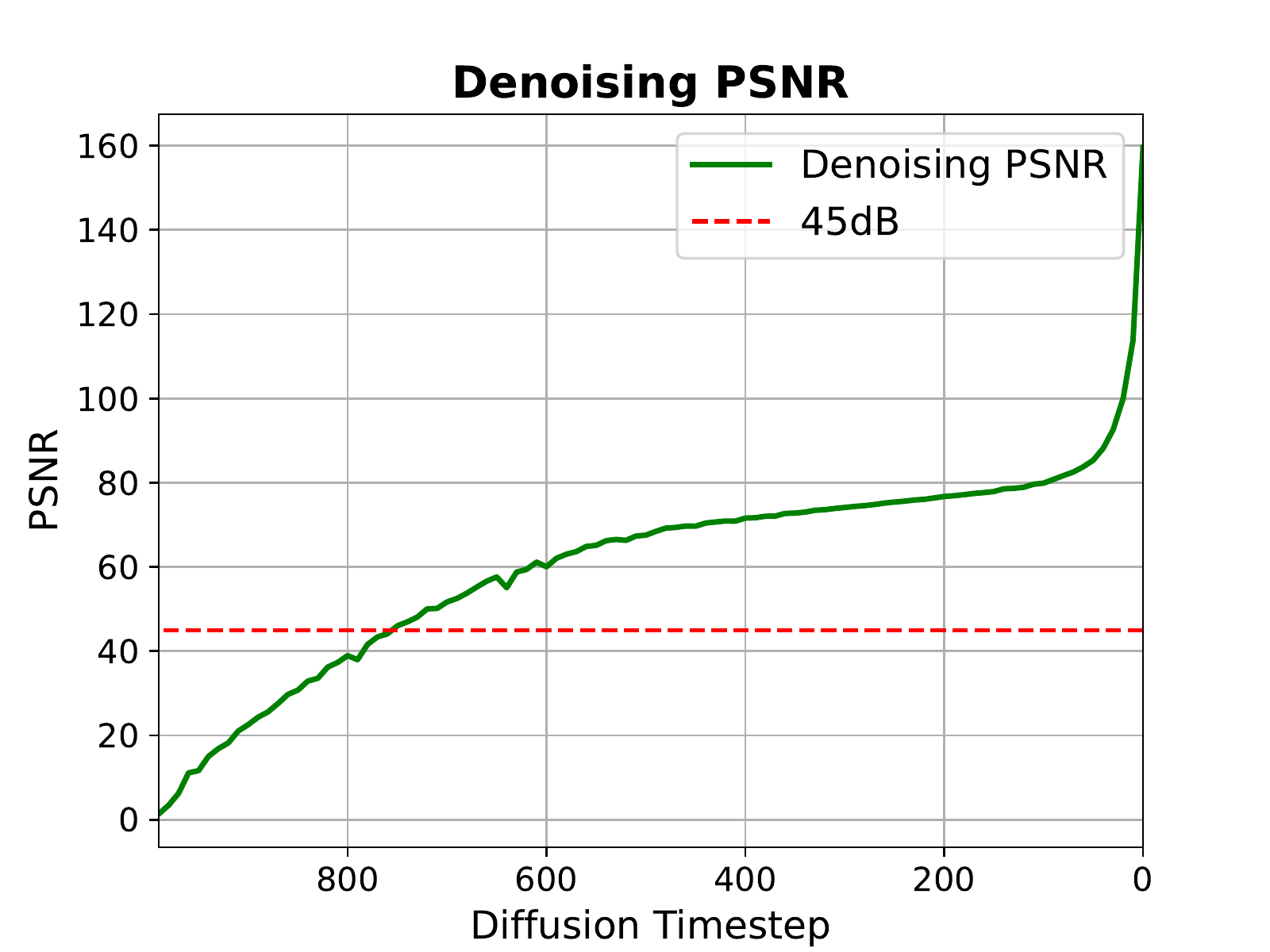}
        \caption{Denoising PSNR between the output of an oracle model and a model trained with GSURE-Diffusion on noisy uncorrupted images.}
        \label{fig:generalization}
    \end{subfigure}
    \begin{subfigure}[t]{0.49\textwidth}
    \centering\captionsetup{width=0.9\linewidth}
        \includegraphics[width=\textwidth]{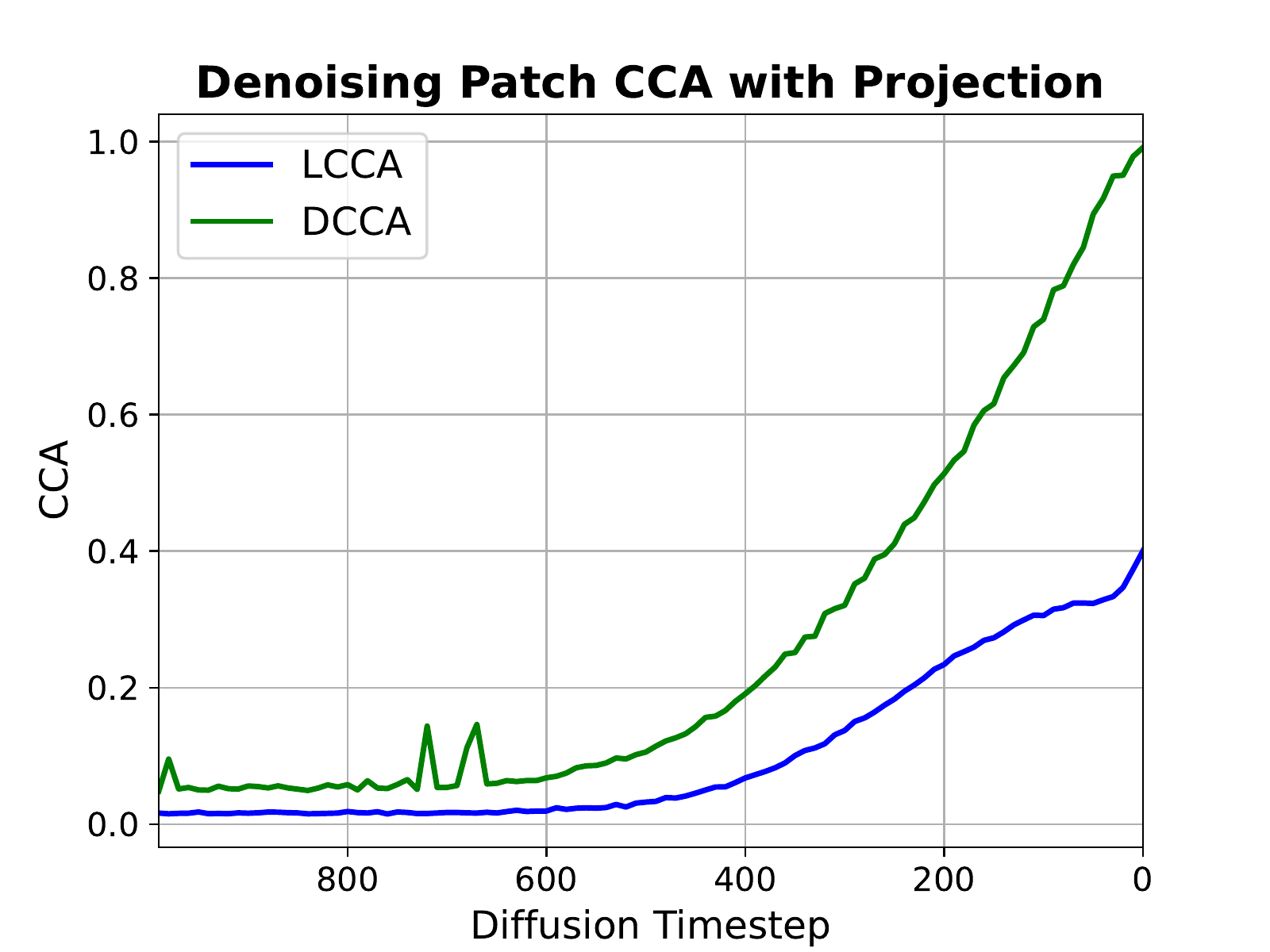}
        \caption{Assessment of correlation between the denoising error and projection, using linear and deep canonical correlation analysis.}
        \label{fig:independence}
    \end{subfigure}
    \caption{Empirical assessment of assumptions.}
\end{figure}

We assess the validity of this assumption for our trained networks using Canonical Correlation Analysis (CCA) measured on corresponding $8\times8$ patches from the denoising error and the projection matrix $\mP$. We perform this comparison using the model trained in \autoref{sec:celeba-exp}, where we can assume locality of the correspondence as the projection matrix $\mP$ is an inpainting binary mask. The model and noise schedule used for the CelebA~\citep{liu2015celeba} assessment fit the model trained with $p=0.2$, $\sigma_0=0.01$.
We present the results in \autoref{fig:independence}, showing both linear and deep CCA (LCCA and DCCA accordingly). As shown in the graphs, the assumption made in ENSURE~\citep{aggarwal2022ensure} nearly holds for our model for timesteps above $500$. The assumption grows less accurate the lower the timestep used in training. This inaccuracy suggests that there may be room for improvement in the assumptions made in ENSURE~\citep{aggarwal2022ensure}, which we believe may be interesting for future work.

\section{Additional Results}

We repeat the human face experiment in \autoref{sec:celeba-exp} with color images from CelebA64~\citep{liu2015celeba}. Both models are trained with the parameters listed in \autoref{sec:impl} for the human face experiment, however the models were trained for a longer 250,000 iterations corresponding to the higher dimension of the data. The finding, shown in \autoref{tab:celeba-color-fid}, from the 10,000 image FID~\citep{fid} with the validation set reflect those in \autoref{sec:celeba-exp}.

\begin{table}[h]
  \caption{FID~\citep{fid} results for diffusion models trained degraded data for color $32 \times 32$-pixel CelebA~\citep{liu2015celeba} images, with different DDIM~\citep{song2020denoising} steps at generation time. Models were trained with (top) or without (bottom) GSURE-Diffusion, on degraded data.}
  \label{tab:celeba-color-fid}
  \begin{center}
  \begin{tabular}{l l c c c c}
    \textbf{Training Scheme} &
    \textbf{Data Degradation} &
    $\mathbf{10}$ \textbf{Steps} &
    $\mathbf{20}$ \textbf{Steps} &
    $\mathbf{50}$ \textbf{Steps} &
    $\mathbf{100}$ \textbf{Steps} \\
    \hline
    Regular
    & No degradation (oracle) & $17.40$ & $11.18$ & $08.32$ & $07.21$ \\
    \hline
    GSURE-Diffusion
    & $p=0.2$, $\sigma_0=0.01$ & $17.39$ & $11.89$ & $09.31$ & $08.97$ \\
  \end{tabular}
  \end{center}
\end{table}


\end{document}